\newtheorem{theorem}{Theorem}
\newtheorem{lemma}{Lemma}
\begin{document}

\title{A generic framework for genuine multipartite entanglement detection}

\author{Xin-Yu Xu}
\author{Qing Zhou}
\author{Shuai Zhao}
\author{Shu-Ming Hu}
\affiliation{Hefei National Research Center for Physical Sciences at the Microscale and School of Physical Sciences, University of Science and Technology of China, Hefei 230026, China}
\affiliation{CAS Center for Excellence in Quantum Information and Quantum Physics, University of Science and Technology of China, Hefei 230026, China}
\author{Li Li}
\email{eidos@ustc.edu.cn}
\author{Nai-Le Liu}
\email{nlliu@ustc.edu.cn}
\author{Kai Chen}
\email{kaichen@ustc.edu.cn}
\affiliation{Hefei National Research Center for Physical Sciences at the Microscale and School of Physical Sciences, University of Science and Technology of China, Hefei 230026, China}
\affiliation{CAS Center for Excellence in Quantum Information and Quantum Physics, University of Science and Technology of China, Hefei 230026, China}
\affiliation{Hefei National Laboratory, University of Science and Technology of China, Hefei 230088, China}

\maketitle

\begin{abstract}
	Design of detection strategies for multipartite entanglement stands as a central importance on our understanding of fundamental quantum mechanics and has had substantial impact on quantum information applications. However, accurate and robust detection approaches are severely hindered, particularly when the number of nodes grows rapidly like in a quantum network. Here we present an exquisite procedure that generates novel entanglement witness for arbitrary targeted state via a generic and operational framework. The framework enjoys a systematic and high-efficient character and allows to substantiate genuine multipartite entanglement for a variety of states that arise naturally in practical situations, and to dramatically outperform currently standard methods. With excellent noise tolerance, our framework should be broadly applicable to witness genuine multipartite entanglement in various practically scenarios, and to facilitate making the best use of entangled resources in the emerging area of quantum network.
\end{abstract}

\section{Introduction} \label{sec:introduction}

As a unique property in quantum theory, entanglement \cite{RevModPhys.81.865} is recognized as a kind of quantum resource \cite{RevModPhys.91.025001} and plays a central role in numerous quantum computing and quantum communication tasks \cite{bennett2000quantum,PhysRevLett.70.1895,RevModPhys.81.1301,feynman1982simulating,deutsch1985quantum}. The ability to generate an increasing number of entangled particles is an essential benchmark for quantum information processing. In past decades, considerable efforts have been made to prepare larger and more complex entangled states in various platforms \cite{Luo620,arute2019quantum,PhysRevLett.105.210504,yao2012,doi:10.1126/science.abg7812,yokoyama2013ultra,PhysRevLett.112.155304}, which experimental systems are currently evolving from several qubits to noisy intermediate scale quantum system (NISQ) \cite{Preskill2018quantumcomputingin}. 

The developments of quantum technologies raise immediately important questions regarding characterization of quantum entanglement of underlying systems. In bipartite systems, various theoretical works have been contributed, such as separability criterions \cite{PhysRevLett.77.1413,HORODECKI1997333,chen205017quantum,rudolph2005further} and entanglement measures \cite{PhysRevLett.95.040504,PhysRevLett.95.210501,plenio2014introduction}, which provide standard tools for characterizing bipartite entanglement. For good reviews, please refer to Refs.\ \cite{RevModPhys.81.865,GUHNE20091,friis2019entanglement}. When it comes to multipartite systems, the problem is much more complicated. The entanglement structure becomes much richer for multipartite systems \cite{PhysRevLett.108.110501,zhou2019detecting}, since the number of possible divisions grows exponentially with the system size \cite{RevModPhys.81.865}. This leads to many types of multipartite entanglement, ranging from non-fully-separable to genuine multipartite entanglement (GME). In the following, we focus on the detection of genuine multipartite entanglement, which is an essential task for multipartite quantum communication and quantum computing tasks. For the detection of GME, many standard tools in the bipartite case, such as separability criterions, become infeasible since they only detect entanglement between two partitions. Meanwhile, a tomographic reconstruction of quantum state required in these methods becomes time-consuming and computationally difficult in the multipartite case.

For genuine multipartite entanglement detection, entanglement witness (EW) \cite{HORODECKI19961,terhal2000bell,lewenstein2000optimization,hyllus2005relations,lewenstein2001characterization} provides an elegant solution both theoretically and experimentally without need of having full tomographic knowledge about the state. Moreover, it is also known that witness operator can also be used to estimate entanglement measures \cite{PhysRevLett.98.110502}. On account of simplicity and efficiency of entanglement witness, it has been widely used for experimental certification of GME in many platforms, such as trapped ions \cite{PhysRevX.8.021012,PhysRevLett.106.130506}, photonic qubits \cite{PhysRevLett.95.210502,gao2010experimental,PhysRevX.8.021072,PhysRevLett.124.160503}, and superconducting qubits\cite{PhysRevLett.122.110501}. Most available GME witnesses are tailored towards some specific states, for instance, the Greenberger-Horne-Zeilinger (GHZ) states \cite{Greenberger1989}, W-states \cite{PhysRevA.62.062314}, graph states \cite{PhysRevA.69.062311,hein2006entanglement}, and so on. Despite few general methods for the construction of GME witness have been proposed \cite{PhysRevLett.92.087902,PhysRevLett.106.190502,PhysRevLett.113.100501,PhysRevLett.111.110503}, their performance is very limited, especially as the size of system grows. One major drawback is the limited scope of noise resistance. For example, the fidelity-based method \cite{PhysRevLett.92.087902} is a canonical witness construction and widely used nowadays. Its noise tolerance decreases dramatically as the system size increases.  In realistic NISQ systems, however, the noise always inevitably grows with the system size. In fact, it has been shown that the fidelity witnesses fail to detect a large amount of mixed entangled states \cite{PhysRevLett.124.200502}. To find more robust GME witnesses, numerical methods have been introduced \cite{PhysRevLett.106.190502},which, however, suffer from expensive computational costs as the system size grows. Hence, although it is known that for any entangled state there exists some EW to detect it \cite{HORODECKI19961}, how to construct a desirable EW to recognize a GME state is still a formidable challenge.

In this work, we propose a generic framework to design robust GME witnesses by analytical and systematic construction. We start by introducing an exquisite method for GME witness with a novel lifting from any set of bipartite EWs. This establishes the link between the standard tools developed in the bipartite case and the GME witness construction. We then provide a well-designed class of optimal bipartite EWs that allows the design of robust GME witnesses for arbitrary pure GME states with our method. The performance of this framework on many typical classes of GME states is further evaluated in terms of white noise tolerance. It can be shown that the framework outperforms the most widely used fidelity-based method with certainty, and outperforms much better than the best known EWs in many cases. Finally, benefiting from the high robustness of the resulting witnesses, we also demonstrate further applications of the framework, such as to provide a tighter lower bounds on the genuine multipartite entanglement measures and detecting unfaithful GME states \cite{PhysRevLett.124.200502}.

\section{Results}

\subsection{Preliminaries}

To start with, we first give the precise definition of biseparable, genuine multipartite entanglement and entanglement witness. A pure state is called \textit{biseparable} if it can be written as a tensor product of two state vectors, i.e., $|\psi_A\rangle\otimes|\psi_{\bar{A}}\rangle$. Then a mixed state is called biseparable if it can be decomposed into a mixture of pure biseparable states, formally,
\begin{equation}
	\rho_{bs}=\sum_{A|\bar{A},i}p_{A|\bar{A},i} |\psi_A^i\rangle\langle\psi_A^i|\otimes|\psi_{\bar{A}}^i\rangle\langle\psi_{\bar{A}}^i|,
\end{equation}
where the summation can be performed over different bipartitions $A|\bar{A}$ of the whole system. A state that is not biseparable is referred to as genuine multipartite entangled. To detect the GME states, the most widely used method is to find an observable $\mathcal{W}_{GME}$ that is nonnegative for all separable states and has negative expectation value on at least one GME state. Then for some multipartite quantum state $\rho$, the fact $Tr(\mathcal{W}_{GME}\rho) < 0$ will reveal the existence of genuine multipartite entanglement, and the $\mathcal{W}_{GME}$ is called a \textit{GME witness}. Moreover, given two EWs $\mathcal{W}_1$ and $\mathcal{W}_2$, if there exists $\lambda > 0$ such that $\mathcal{W}_1 - \lambda \mathcal{W}_2$ is positive semidefinite, i.e., $\mathcal{W}_1 \succeq \lambda \mathcal{W}_2$, one says that $\mathcal{W}_2$ is \textit{finer} than $\mathcal{W}_1$ \cite{lewenstein2000optimization}. The finer witness operator $\mathcal{W}_2$ detects more entangled states than $\mathcal{W}_1$. An EW is \textit{optimal} if no finer EW exists. 

\subsection{Design GME witness from a complete set of bipartite EWs}
Due to its non-negativity over all biseparable states, a GME witness $\mathcal{W}_{GME}$ also serves as bipartite EW with respect to each possible bipartition of the whole system. In other words, there exists a complete set of bipartite EWs $\{\mathcal{W}_{A|\bar{A}}\}$ satisfying $\mathcal{W}_{GME} \succeq \mathcal{W}_{A|\bar{A}}$ for each bipartition $A|\bar{A}$. This fact, from the opposite point of view, indicates that the GME witness $\mathcal{W}_{GME}$ is designed based on the set $\{\mathcal{W}_{A|\bar{A}}\}$ according to the constraint $\mathcal{W}_{GME} \succeq \mathcal{W}_{A|\bar{A}}$. This naturally provides a general framework for constructing GME witnesses from a complete set of bipartite EWs. Remarkably, the set $\{\mathcal{W}_{A|\bar{A}}\}$ itself cannot be directly used to detect GME states, as there exist biseparable states that are entangled with respect to every possible bipartition \cite{GUHNE20091}. While there are two crucial issues with such a framework. The first one is how to find the operator satisfying $\mathcal{W}_{GME} \succeq \mathcal{W}_{A|\bar{A}}$, and the second one is to decide which set of bipartite EWs should be used. Optimal solutions to these two problems is hard in general, and there have been only a few previous related studies on these issues. In Ref.\ \cite{PhysRevLett.113.100501}, an alternative solution was proposed to establish a connection between positive maps and multipartite EWs, where EWs detecting multipartite bound entangled state have been obtained. While in the following, we present a novel alternatively solution which is capable of constructing robust GME witnesses. 

\subsection{An operational framework for constructing robust GME witness}

Any mixed GME state contains at least one pure GME state as a component, while the remaining components can be treated as noises. In order to detect mixed GME states with linear EW, it is natural to employ a witness operator for the pure GME component that is sufficiently robust to noise from the other components. In fact, the set of all optimal GME witnesses for all pure GME states will be sufficient to detect all GME states. However, finding all optimal GME witnesses is naturally a formidable task. Therefore, to advance a solution to this problem, we propose an operational framework to construct a class of robust GME witnesses for all pure GME states. 

To address the problem of lifting any given set of bipartite EWs to multipartite, one can accomplish it in two steps: (1) For the first step, each bipartite EW $\mathcal{W}_{A|\bar{A}}$ is decomposed into some projectors. Note that the entanglement witness is designed for some pure entangled state $|\psi\rangle$. Hence we extract a term $-|\psi\rangle\langle\psi|$ before the decomposition. That is, the bipartite EWs are rewritten as $\mathcal{W}_{A|\bar{A}} = \mathcal{O}_{A|\bar{A}} - |\psi\rangle\langle\psi|$, and a spectral decomposition of $\mathcal{O}_{A|\bar{A}} = \mathcal{W}_{A|\bar{A}}+|\psi\rangle\langle\psi|$ is performed
\begin{equation}\label{bEW}
	\mathcal{O}_{A|\bar{A}}=\sum_{|\vec{v}_{i,A|\bar{A}}\rangle \in \mathcal{S}_{A|\bar{A}}}
	c_{i,A|\bar{A}} |\vec{v}_{i,A|\bar{A}}\rangle\langle \vec{v}_{i,A|\bar{A}}|,
\end{equation}
with $\mathcal{S}_{A|\bar{A}}$ being the set of eigenvectors and $c_{i,A|\bar{A}}$ being the corresponding eigenvalues. All these eigenvectors are collected into a set $\mathcal{S} = \cup_{A|\bar{A}} \mathcal{S}_{A|\bar{A}}$. (2). For the second step, the obtained set $\mathcal{S}$ is divided into $m$ subsets $\{\mathcal{S}_k\}_{k=1}^m$, such that the vectors from different subsets are orthogonal with each other. Denote $\tilde{I}_k$ as the identity operator on the subspace $V_k$ spanned by the state vectors from subset $\mathcal{S}_k$, and $c_k=max_{|\vec{v}_{i,A|\bar{A}}\rangle \in \mathcal{S}_k} c_{i,A|\bar{A}} $ as the maximal coefficient attached to the state vectors in $\mathcal{S}_k$. With the above preparation and notation, we proceed to the following Theorem:
\begin{theorem}
Given any pure GME state $|\psi\rangle$ and a set of bipartite EWs $\{\mathcal{W}_{A|\bar{A}}\}$ detecting $|\psi\rangle$ for all possible $A|\bar{A}$, the following operator $\mathcal{\hat{W}}$
\begin{equation}
	\mathcal{\hat{W}}=\sum_{k=1}^m c_k \tilde{I}_k -|\psi\rangle\langle\psi|,
\end{equation}
is nonnegative over all biseparable states, where the $c_k$ and $\tilde{I}_k$ have been defined above.
\end{theorem}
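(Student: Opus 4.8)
The plan is to establish the operator domination $\hat{\mathcal{W}}\succeq\mathcal{W}_{A|\bar{A}}$ for \emph{every} bipartition $A|\bar{A}$, and then to read off non-negativity on biseparable states from the non-negativity of the $\mathcal{W}_{A|\bar{A}}$. Indeed, by linearity of $\rho\mapsto\mathrm{Tr}(\hat{\mathcal{W}}\rho)$ together with the definition of a biseparable mixed state given above, it suffices to verify $\langle\phi_A|\langle\phi_{\bar{A}}|\,\hat{\mathcal{W}}\,|\phi_A\rangle|\phi_{\bar{A}}\rangle\ge 0$ for each bipartition $A|\bar{A}$ and each product vector $|\phi_A\rangle\otimes|\phi_{\bar{A}}\rangle$. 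Fixing such an $A|\bar{A}$, once $\hat{\mathcal{W}}\succeq\mathcal{W}_{A|\bar{A}}$ is known, the hypothesis that $\mathcal{W}_{A|\bar{A}}$ is a bipartite EW for the cut $A|\bar{A}$ gives $\langle\phi_A,\phi_{\bar{A}}|\hat{\mathcal{W}}|\phi_A,\phi_{\bar{A}}\rangle\ge\langle\phi_A,\phi_{\bar{A}}|\mathcal{W}_{A|\bar{A}}|\phi_A,\phi_{\bar{A}}\rangle\ge 0$, completing the proof.

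Because the $-|\psi\rangle\langle\psi|$ pieces cancel, $\hat{\mathcal{W}}-\mathcal{W}_{A|\bar{A}}=\sum_{k=1}^m c_k\tilde{I}_k-\mathcal{O}_{A|\bar{A}}$, so the whole theorem reduces to the single operator inequality $\sum_{k=1}^m c_k\tilde{I}_k\succeq\mathcal{O}_{A|\bar{A}}$. To prove it I would exploit the two defining properties of the partition $\{\mathcal{S}_k\}$. First, vectors from distinct subsets are orthogonal, hence the subspaces $V_k$ are mutually orthogonal and $\sum_k c_k\tilde{I}_k$ is block diagonal with respect to $\bigoplus_k V_k$. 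Second, each eigenvector $|\vec{v}_{i,A|\bar{A}}\rangle$ from the spectral decomposition \eqref{bEW} lies in exactly one $\mathcal{S}_k$, hence in exactly one $V_k$; regrouping that spectral sum accordingly gives $\mathcal{O}_{A|\bar{A}}=\sum_k\mathcal{O}^{(k)}$, where $\mathcal{O}^{(k)}:=\sum_{i:\,|\vec{v}_{i,A|\bar{A}}\rangle\in\mathcal{S}_k}c_{i,A|\bar{A}}\,|\vec{v}_{i,A|\bar{A}}\rangle\langle\vec{v}_{i,A|\bar{A}}|$ is supported on $V_k$. By orthogonality of the $V_k$ it then suffices to check $\mathcal{O}^{(k)}\preceq c_k\tilde{I}_k$ for each $k$ in isolation.

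For this per-block step I would use that the eigenvectors of the single Hermitian operator $\mathcal{O}_{A|\bar{A}}$ can be taken orthonormal, so the ones appearing in $\mathcal{O}^{(k)}$ form an orthonormal family spanning a subspace of $V_k$; since each of their coefficients satisfies $c_{i,A|\bar{A}}\le c_k$ by the very definition of $c_k$, and since $c_k\ge 0$ (the $\mathcal{O}_{A|\bar{A}}$ being positive semidefinite, as sums of projectors in the construction), one obtains $\mathcal{O}^{(k)}\preceq c_k P_k\preceq c_k\tilde{I}_k$, with $P_k$ the projector onto the span of that family. Summing over $k$ delivers $\sum_k c_k\tilde{I}_k\succeq\mathcal{O}_{A|\bar{A}}$ and hence the theorem.

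The step I expect to demand the most care is the bookkeeping behind the block decomposition: a single subset $\mathcal{S}_k$ may gather eigenvectors that come from several different bipartitions and that need not be mutually orthogonal, so one must be careful that only the eigenvectors of the one fixed operator $\mathcal{O}_{A|\bar{A}}$ feed into $\mathcal{O}^{(k)}$ — and these are automatically orthonormal, being eigenvectors of a single Hermitian operator. The rest is routine: orthogonality of the $V_k$ turns one global positivity statement into $m$ independent block ones, and the elementary bound $\sum_i c_i|v_i\rangle\langle v_i|\preceq(\max_i c_i)\,Q$ for a nonnegative combination of rank-one projectors onto an orthonormal family — valid for any projector $Q$ onto a subspace containing all the $|v_i\rangle$ — closes each block.
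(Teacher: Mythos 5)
Your proposal is correct and follows essentially the same route as the paper: reduce the claim to the operator inequality $\hat{\mathcal{W}}\succeq\mathcal{W}_{A|\bar{A}}$ for each cut, cancel the $-|\psi\rangle\langle\psi|$ terms, split $\mathcal{O}_{A|\bar{A}}$ into blocks supported on the mutually orthogonal subspaces $V_k$, and bound each block by $c_k\tilde{I}_k$ using the orthonormality of the eigenvectors of the single Hermitian operator $\mathcal{O}_{A|\bar{A}}$ together with $c_{i,A|\bar{A}}\le c_k$. Your explicit remarks on the per-cut orthonormality of the eigenvectors and on $c_k\ge 0$ make precise two points the paper leaves implicit, but the argument is the same.
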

\begin{proof}
To prove the statement, it suffices to observe
\begin{equation}
	\begin{aligned}
		\mathcal{\hat{W}}-\mathcal{W}_{A|\bar{A}}=&\sum_{k=1}^m c_k \tilde{I}_k -\mathcal{O}_{A|\bar{A}} \\
		=&\sum_{k=1}^m \left(c_k \tilde{I}_k-\sum_{|v_{i,A|\bar{A}}\rangle \in \mathcal{S}_k \cap \mathcal{S}_{A|\bar{A}}} c_{i,A|\bar{A}}
		|\vec{v}_{i,A|\bar{A}}\rangle\langle \vec{v}_{i,A|\bar{A}}|\right) \\
		\ge& \sum_{k=1}^m c_k\left(\tilde{I}_k-\sum_{|v_{i,A|\bar{A}}\rangle \in \mathcal{S}_k \cap \mathcal{S}_{A|\bar{A}}}
		|\vec{v}_{i,A|\bar{A}}\rangle\langle \vec{v}_{i,A|\bar{A}}|\right) \\
		\ge& 0,
	\end{aligned}
\end{equation}
where the inequalities can be derived directly from the definitions of $c_k$ and $\tilde{I}_k$.
\end{proof}

The above construction can be interpreted geometrically. That is, noise from different subspaces has different degrees of influence on the entanglement properties of the target state. The influence is characterized by the coefficients $c_k$, and a small $c_k$ indicates that noise from this subspace hardly affects the entanglement property of target state. Therefore, Theorem 1 can be seen as robust GME witness construction with the help of some prior knowledge of the target state, which comes from the set of bipartite EWs $\{\mathcal{W}_{A|\bar{A}}\}$.

Remarkably, Theorem 1 itself cannot be used as an operational framework for GME witness construction, since the resulting operators can be positive semidefinite and fail to detect any GME state. In fact, one can hardly expect a nontrivial result when the set of bipartite EWs $\mathcal{W}_{A|\bar{A}}$ are chosen randomly. Fortunately, standard tools exist for constructing bipartite EWs based on positive maps. In the following, in order to obtain an operational and generic framework for GME witness construction, we provide a promising choice on the set of bipartite EWs, which are designed for the target states based on partial transposition. 

Under any given bipartition $A|\bar{A}$, the target state $|\psi\rangle$ can be written in a Schmidt decomposition form $|\psi\rangle=\sum_{i=0}^{r_A-1} \sqrt{\lambda_{i,~A|\bar{A}}}|i_Ai_{\bar{A}}\rangle$, with $r_A$ being the corresponding Schmidt rank. Note that here the local dimension of the Hilbert space need not be fixed. Then we introduce a class of bipartite EWs $\mathcal{W}_{o,A|\bar{A}}$ in order to use them in the construction of GME witness.
\begin{equation}\label{eq:obEW}
	\mathcal{W}_{o,A|\bar{A}}= \sum_{i,j=0}^{r_A-1} \sqrt{\lambda_{i,~A|\bar{A}}\lambda_{j,~A|\bar{A}}}
	|i_Aj_{\bar{A}}\rangle\langle i_Aj_{\bar{A}}|-|\psi\rangle\langle\psi|.
\end{equation}
The choice of $\mathcal{W}_{o,A|\bar{A}}$ is mainly based on two considerations. Firstly, $\mathcal{W}_{o,A|\bar{A}}+|\psi\rangle\langle\psi|$ naturally takes the decomposition form in the Eq.\ (\ref{bEW}). Secondly, the above $\mathcal{W}_{o,A|\bar{A}}$ are a class of optimal bipartite EWs. For a detailed illustration and discussion on the $\mathcal{W}_{o,A|\bar{A}}$, please refer to Appendix.\ \ref{sec:appendix bipartite EW}.

These bipartite EWs, together with Theorem 1, promise a generic framework to construct GME witnesses with certainty. The explicit procedure is as follows:
\begin{enumerate}[(1).]
    \item Firstly, find the set $\mathcal{S}$. For each bipartition $M|\bar{M}$, calculate the Schmidt decomposition of $|\psi\rangle$ with respect to $M|\bar{M}$, 
    \begin{equation}
        |\psi\rangle=\sum_{i=0}^{r_{M|\bar{M}}-1} \lambda_{i,M|\bar{M}}|\varphi_{i,M}\rangle|\varphi_{i,\bar{M}}\rangle,
    \end{equation}
    with $r_{M|\bar{M}}$ being the Schmidt rank under this bipartition. A total of $r^2_{M|\bar{M}}$ vectors will be added to the set $\mathcal{S}$, and each of them has a corresponding coefficient. This is denoted by 
    \begin{equation}
        \{(\sqrt{\lambda_{i,M|\bar{M}}\lambda_{j,M|\bar{M}}},~|\varphi_{i,M}\rangle|\varphi_{j,\bar{M}}\rangle)\}_{i,j=0}^{r_{M|\bar{M}}-1}. 
    \end{equation}
    After traversing all possible bipartitions, one will end up with a set of vectors $\mathcal{S}$ as well as their corresponding coefficients, that is, $\{(c_k,~|\psi_k\rangle)\}_{k=1}^{|\mathcal{S}|}$. 

    \item Secondly, find the finest division of $\mathcal{S}$ such that vectors from different subsets are orthogonal with each other. This can be achieved with the following steps: 
    \begin{enumerate}[(i)]
    	\item
    	Put the first element $|\psi_1\rangle$ of $\mathcal{S}$ into an empty subset $\mathcal{S}_1$. 
    	\item
   	 	For every other vector in $\mathcal{S}-\mathcal{S}_1$, if it is not orthogonal with all vectors in the set $\mathcal{S}_1$, it is added into $\mathcal{S}_1$. Repeat this step until no new vector can be added to $\mathcal{S}_1$.
    	\item
    	For the rest vectors in $\mathcal{S}-\mathcal{S}_1$, repeat the above two steps to obtain $\mathcal{S}-\mathcal{S}_1-\mathcal{S}_2$, $\mathcal{S}-\mathcal{S}_1-\mathcal{S}_2-\mathcal{S}_3$, $\cdots$, until one has classified all the elements of $\mathcal{S}$. 
    	\item 
    	One obtain a division $\mathcal{S}=\sum_{k=1}^{m}\mathcal{S}_k$. 
    \end{enumerate}

    \item Thirdly, calculate the subspace spanned by the vectors in subset $\mathcal{S}_k$. By performing Schmidt orthogonalization of the vectors in $\mathcal{S}_k$, one can derive the subspace spanned by these vectors and obtain the identity operator $\tilde{I}_k$ on this subspace. 
    
    \item Finally, for each subset $\mathcal{S}_k$, find the maximal coefficients $c_k$ attached to the vectors in it, and construct a GME witness using Theorem 1.
\end{enumerate}

There are two remarks to note about this method. Firstly, the resulting witness from the above procedure is always finer than the commonly used GME fidelity witness $\mathcal{W}_F =\lambda I-|\psi\rangle\langle\psi| $ for $|\psi\rangle$, with $\lambda=\max_{A|\bar{A}} \lambda_{0,A|\bar{A}}$ (Here it is assumed that the Schmidt coefficients $\lambda_{i,A|\bar{A}}$ are in decreasing order). To illustrate this, note that if the bipartite EWs are chosen as the bipartite fidelity witness $\mathcal{W}_{F,A|\bar{A}} = \lambda_{0,A|\bar{A}} I-|\psi\rangle\langle\psi|$, by applying Theorem 1, the obtained operator is nothing but the $\mathcal{W}_F$. Whereas by checking $\mathcal{W}_{F,A|\bar{A}}-\mathcal{W}_{o,A|\bar{A}} \succeq 0$, it is straightforward to verify that the above $\mathcal{W}_{o,A|\bar{A}}$ is finer than the bipartite fidelity witness $\mathcal{W}_{F,A|\bar{A}}$. Therefore, when Theorem 1 is applied to the set of $\mathcal{W}_{o,A|\bar{A}}$, the resulting GME witness strictly outperforms the corresponding fidelity witness $\mathcal{W}_{F}$. Secondly, one starts from a complete set of bipartite EWs in the above construction, leading to EWs that detect genuine multipartite entanglement. While if one starts from a smaller set of bipartite EWs, the method allows also for flexible applications in verifying other kinds of multipartite entanglement, e.g., characterizing the entanglement depth.

\section{Examples}

To help a better understanding as well as quantitatively investigating the robustness of the framework, we proceed to some explicit examples, where the white noise tolerance is employed as a figure of merit to evaluate its performance in practice. The white noise tolerance of some witness operator $\mathcal{W}$ for $|\psi\rangle$ is the critical value of $p$ such that the mixed state $pI/d+(1-p)|\psi\rangle\langle\psi|$ is not detected by $\mathcal{W}$.

\subsection{$W$-state} 
To investigate the asymptotic behavior of this framework with an increasing system size, we start with the $n$-qubit $W$-state $|W_n\rangle=(|00\cdots01\rangle$$+|00\cdots10\rangle+$$\cdots+|10\cdots00\rangle)/\sqrt{n}$, which is widely used in quantum information processing tasks. For the $W$-state, we find the GME witness (see Appendix.\ \ref{sec:appendix w state} for a proof.)
\begin{equation}
	\mathcal{W}_{|W_n\rangle}=\frac{n-1}{n}\mathcal{P}^n_1+\frac{\sqrt{\lfloor n/2\rfloor(n-\lfloor n/2\rfloor)}}{n}
	(\mathcal{P}^n_0 +\mathcal{P}^n_2)-|W_n\rangle\langle W_n|,
\end{equation}
with $\mathcal{P}^n_i=\sum_m\pi_m(|0\rangle^{\otimes n-i}|1\rangle^{\otimes i})\pi_m(\langle 0|^{\otimes n-i}\langle 1|^{\otimes i})$, where the summation $m$ is over all possible permutation of $|0\rangle^{\otimes n-i}|1\rangle^{\otimes i}$. The $\mathcal{W}_{|W_n\rangle}$ recovers a class of EWs presented in Ref.\ \cite{Bergmann_2013}, which are the most powerful ones for the $W$-state presently. Its white noise tolerance also tends to 1 for an increasing number of qubits. While for the fidelity witness, its white noise tolerance is $1/(n(1-1/2^n))$, tending to $1/n$ for large $n$.

\subsection{Graph state}
Graph states are a class of genuine multipartite entangled states that are of great importance for measurement-based quantum computation \cite{briegel2009measurement} and quantum error correction \cite{PhysRevA.65.012308}, etc. In Refs.\ \cite{PhysRevLett.106.190502,PhysRevA.84.032310}, the authors have developed powerful entanglement witnesses for graph states. While our framework suggests that there is still much room for improvement in the robustness of these existing results.

More specifically, we focus on a typical class of graph state---the $n$-qubit ($n\ge4$) linear cluster states $|Cl_n\rangle$ in this example. The $|Cl_n\rangle$ can be expressed by a set of stabilizers $\{g_i\}_{i=1}^n$, with $g_i=Z_{i-1}X_iZ_{i+1}$ ($2\le i \le n-1$), $g_1=X_1Z_2$ and $g_n=Z_{n-1}X_n$ respectively, where the $X$ and $Z$ are Pauli operators. All the common eigenstates of these stabilizers introduce a complete basis, i.e., the graph state basis. This basis can be denoted by $|\vec{a}\rangle_{Cl_n}$, with $\vec{a}=a_1a_2\cdots a_n\in \{0,1\}^n$, such that $g_i|\vec{a}\rangle_{Cl_n}=(-1)^{a_i}|\vec{a}\rangle_{Cl_n}$ for $i=1,\cdots,n$. Specially, the $|Cl_n\rangle$ corresponds to $|00\cdots0\rangle_{Cl_n}$. When applied to the linear cluster state, our framework results in a GME witness which is diagonal under the graph state basis, (For the explicit construction process, we refer to Appendix.\ \ref{sec:appendix graph state}.)
\begin{equation}
	\mathcal{W}_{Cl_n}=\sum_{k=1}^{\lceil n/3 \rceil}
	\sum_{\vec{a}\in V_k}\frac{1}{2^k-1}|\vec{a}\rangle_{Cl_n}\langle\vec{a}|-|Cl_n\rangle\langle Cl_n|.
\end{equation}
Here a vector $\vec{a}$ belongs to $V_k$ if there exist at most $k$ for the number of `$1$'s in $\vec{a}$, such that their distance with each other are larger than $2$ at the same time (for instance, $1101100$ belongs to $V_2$ while $1001011$ belongs to $V_3$).
\begin{figure*}[h]
	\centering
	\includegraphics[width=10.67cm,height=8cm]{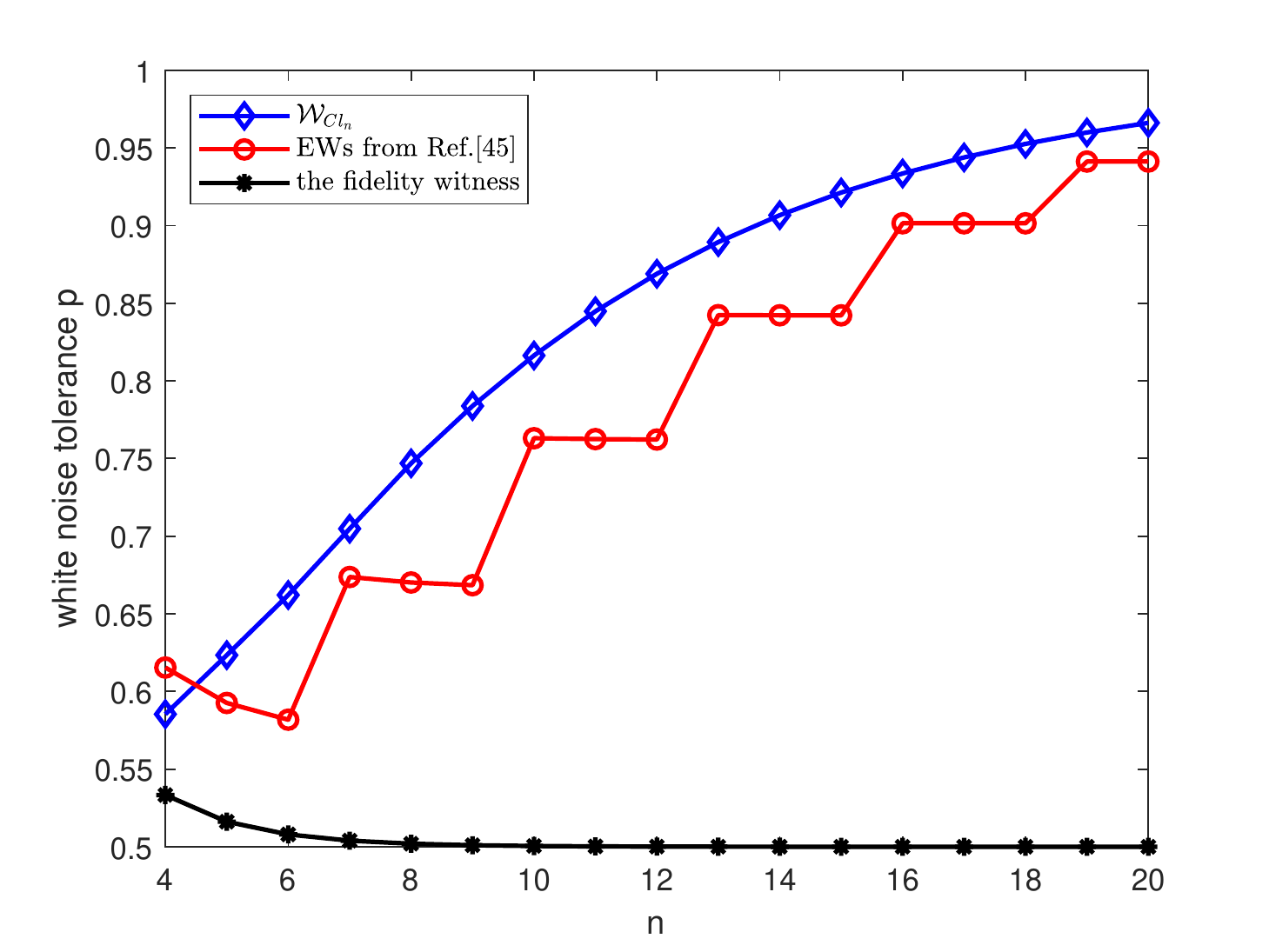}
	\caption{
	In this figure we illustrate the performance of $\mathcal{W}_{Cl_n}$ by showing its white noise tolerance for $n$-qubit cluster state up to $20$ qubits. It is plotted with a blue line with diamonds. As a comparison, the red circle line is the best known results introduced in the Ref.\ \cite{PhysRevLett.106.190502}. The result in this paper outperforms this existing EW since the number of qubit is larger than $5$. Both of their white noise tolerance tend to $1$. Remarkably, for the widely used fidelity witness of $|Cl_n\rangle$, its white noise tolerance is given by the black line with filled circles tends to $1/2$ with an increasing qubit number.
	}	
	\label{fig:fig1}
\end{figure*}
Its white noise tolerance $p_{Cl_n}$ of the $\mathcal{W}_{Cl_n}$ is presented in Fig.\ \ref{fig:fig1}. It is observed that the $\mathcal{W}_{Cl_n}$ can outperform the best known class of EWs provided in the Ref.\ \cite{PhysRevLett.106.190502} for $n>5$. Meanwhile, the white noise tolerance $p_{Cl_n}$ exhibits a similar asymptotic behavior as in the first example, that is, tending to $1$ for large $n$. We remark that while the resulting EWs are quite robust, they are not optimal. In fact, the optimality of the bipartite EWs employed in the construction is not sufficient to guarantee the optimality of the resulting GME witness. For some explicit target states, one may either analytically or numerically optimize the result. While a systematic and operational improvement of this framework remains an open question. A brief discussion on this issue is provided at the end of Appendix.\ \ref{sec:appendix graph state}.

\subsection{Multipartite states admitting Schmidt decomposition} 
In the above examples, we benchmark our method with some well studied states. And now we turn to other less investigated states, where this method remains powerful. A typical class is the multipartite states admitting Schmidt decomposition. Without loss of generality, such state takes the form $|\phi_s\rangle=\sum_{i=0}^{d-1}\sqrt{\lambda_i}|i\rangle^{\otimes n}$, where the $\lambda_i$ are in decreasing order. This class of states includes high-dimensional GHZ states $ |GHZ_n^d\rangle = \sum_{i=0}^{d-1} |i\rangle^{\otimes n}/\sqrt{d}$ as a typical case when all the Schmidt coefficients are equal. For the multipartite states admitting Schmidt decomposition, our method leads to a class of optimal EWs (see Appendix.\ \ref{sec:appendix GHZ state} for a proof.)
\begin{equation}\label{eq:eg3}
	\mathcal{W}_{|\phi_s\rangle}=\sum_{\substack{i,j=0,\\i< j}}^{d-1} \sum_{r=1}^{n-1}\sum_m \sqrt{\lambda_i\lambda_j}\pi_m(|i\rangle^{\otimes r}|j\rangle^{\otimes n-r})\pi_m(\langle i^{\otimes r}|\langle j|^{\otimes n-r})+\sum_{i=0}^{d-1}\lambda_i|i\rangle\langle i|^{\otimes n}-|\phi_s\rangle\langle\phi_s|,
\end{equation}
where $\pi_m(|i\rangle^{\otimes r}|j\rangle^{\otimes n-r})$ is a permutation of $|i\rangle^{\otimes r}|j\rangle^{\otimes n-r}$
and the summation of $m$ is over all possible permutations. 
\begin{figure*}[h]
	\centering
	\includegraphics[width=15.83cm,height=12cm]{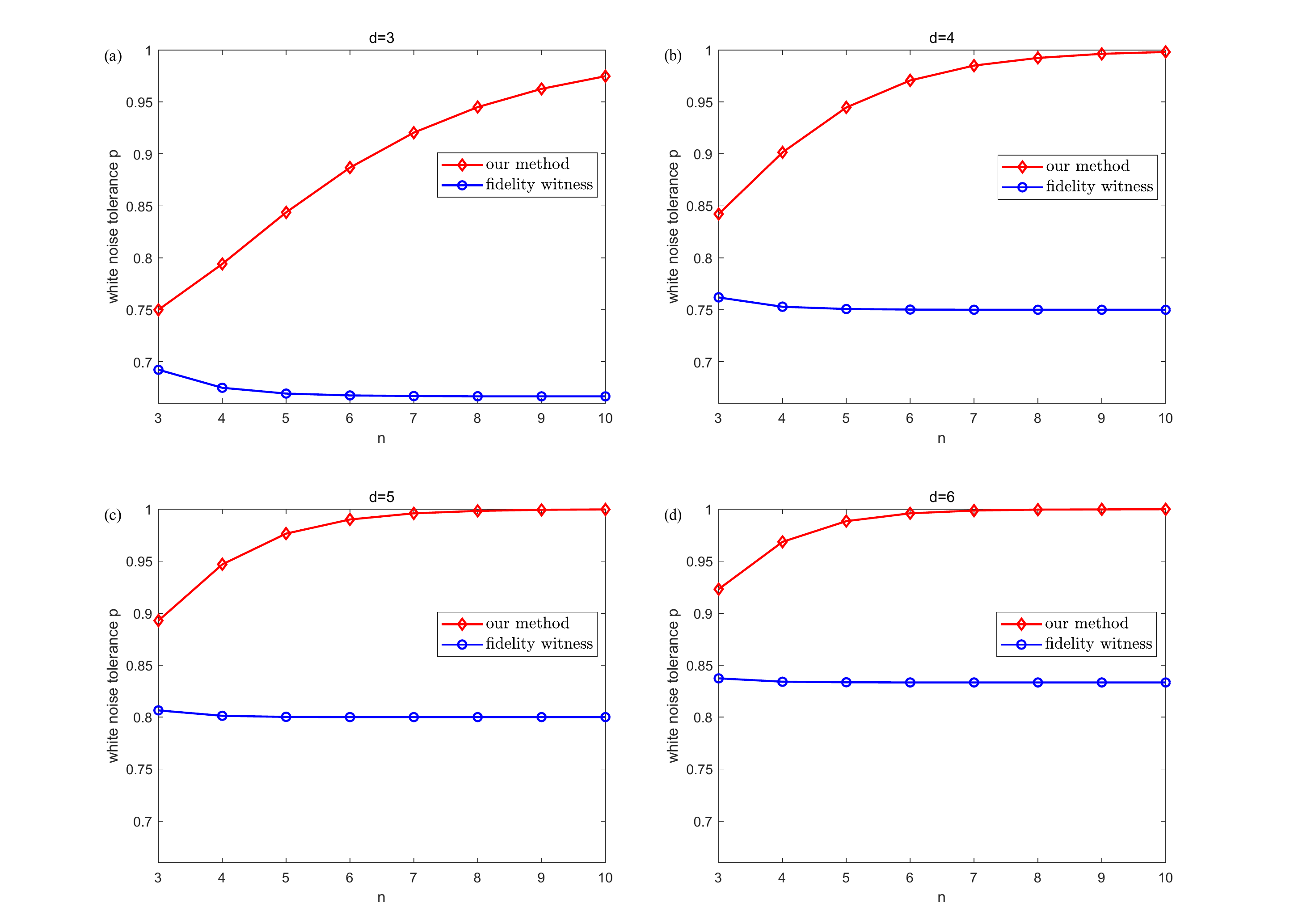}
	\caption{
	In sub-figure (a),(b),(c) and (d), we show how the white noise tolerance of different EWs varies with an increasing qudit number $n$, with $d=3,~4,~5,~6$ respectively. The target state is $d$-dimensional GHZ states, which belongs to the class of states in Example 1. In each sub-figure, the white noise tolerance of the GME witness in the Eq.\ (\ref{eq:eg3}) is plotted by red circled line, and in comparison the white noise tolerance of the normal fidelity witness $\mathcal{W}_F^{|\psi_s\rangle}=I/d -|GHZ_n^d\rangle\langle GHZ_n^d|$ is plotted by blue circled line. As observed in this figure, for all local dimension $d\ge 3$, our results increase and converge to $1$ with the increasing system size. While for the normal fidelity witness, its white noise tolerance decreases when the number of qudits grows and eventually tends to $1-1/d$.
	}
	\label{fig:ghz}
\end{figure*}
The white noise tolerance of $\mathcal{W}_{|\phi_s\rangle}$ is $p_{\mathcal{W}_{|\phi_s\rangle}}=(1-\sum_{i=0}^{d-1}\lambda_i^2)/(1-\sum_{i=0}^{d-1}\lambda_i^2+\frac{2^{n-1}-1}{d^n}((\sum_{i=0}^{d-1}\sqrt{\lambda_i})^2-1)))$. The $p_{\mathcal{W}_{|\phi_s\rangle}}$ tends to $1$ for large $n$ when $d > 2$. As a comparison, the best-known GME witness for this kind of states comes from the fidelity-based method, with $\mathcal{W}_F^{|\psi_s\rangle} = \lambda_0 I- |\phi_s\rangle\langle\phi_s|$. The white noise tolerance of $\mathcal{W}_F^{|\psi_s\rangle}$ is $(1-\lambda_0)/(1-1/d^n)$, which tends to $1-\lambda_0\le 1-1/d$ with an increasing system size. For the special case of $n$-qudit GHZ states $|GHZ_n^d\rangle$, the performance of our construction and the fidelity-based method is compared in Fig.\ \ref{fig:ghz}, where a significant improvement is demonstrated. Note that for $n$-qubit GHZ states $|GHZ_n\rangle$, the fidelity witness is already optimal, and hence we start from the local dimension $d=3$ in Fig.\ \ref{fig:ghz}.

\subsection{The four-qubit singlet state} 
Multi-qubit singlet states are another interesting family of multi-qubit states. They are invariant under a simultaneous unitary rotation on all qubits ($U^{\otimes n}|\psi\rangle\langle \psi| {U^{\dag}}^{\otimes n} =|\psi\rangle\langle \psi|$). In the four-qubit case, all four-qubit singlet states live in a two-dimensional subspace of the whole Hilbert space. Without loss of generality, it can be denoted as 
\begin{equation}
	|\varphi_4\rangle = a |\psi_{12}^-\rangle\otimes |\psi_{34}^-\rangle + e^{i\theta}b |\psi_{13}^-\rangle\otimes |\psi_{24}^-\rangle, 
\end{equation} 
with the constraint $a^2+b^2+cos(\theta)ab=1$ and $|\psi_{12}^-\rangle$ being the two-qubit singlet state $(|01\rangle-|10\rangle)/\sqrt{2}$ on the first two qubits. Specially, for the choice of $\theta = \pi/2$, one arrives at a class of four-qubit singlet states decided by a single parameter $|\varphi_4(a)\rangle = a |\psi_{12}^-\rangle\otimes |\psi_{34}^-\rangle + \sqrt{1-a^2} |\psi_{13}^-\rangle\otimes |\psi_{24}^-\rangle$ with $a \in [-1,1]$. For this class of state $|\varphi_4(a)\rangle$, our framework results in the following GME witness
\begin{equation}
	\mathcal{W}_4(a)= \alpha \mathcal{P}^4_2 +\frac{1}{2}(\mathcal{P}^4_1 + \mathcal{P}^4_3) +\frac{1}{4}(\mathcal{P}^4_0 + \mathcal{P}^4_4)- |\varphi_4\rangle\langle \varphi_4|,
\end{equation}
where $\alpha = \max\{(1+3(1-a^2))/4, (1+3a^2)/4\} \ge 5/8$. While the fidelity based witness for such state is $\mathcal{W}_4'(a) = \alpha I - |\varphi_4\rangle\langle \varphi_4|$. In Appendix.\ \ref{sec:appendix singlet}, a further discussion of the entanglement detection for multi-qubit singlet states is proposed based on our framework.

Consequently, we have provided a generic framework for detecting arbitrary target GME state in a noisy systems by constructing robust GME witnesses. Firstly, by benchmarking its performance on some well-studied states, it is observed that this framework results in robust GME witnesses that perform comparable to the current best witnesses for these states. For other less investigated states, the most widely used method to construct EW for them is the fidelity-based method. As shown in these examples, our framework can provide a significant improvement compared with the fidelity-based method. This also leads to the conjecture that a large amount of pure GME states become fairly robust to noise as the system size increases. Secondly, the advantage of our framework against the fidelity-based method comes with no experimental overheads. This benefits from the fact that the $\sum_k c_k \tilde{I}_k$ term in this construction is usually diagonal in some well-defined basis, such as the graph state basis and the computational basis. Finally, it should be stressed that Theorem 1 can be applied not only to the class of bipartite EWs shown in Eq.\ (\ref{eq:obEW}), but also to other classes of bipartite EWs. This potentially results in some different GME witnesses. Further example is provided in Appendix.\ \ref{sec:appendix generalization}.

\section{Applications of the resulting GME witnesses}

\subsection{Detection of unfaithfulness}
The unfaithful entangled states are a large class of states that cannot be recognized with any fidelity witness and have been attracted both theoretical and experimental interests \cite{PhysRevLett.126.140503,zhan2020detecting,PhysRevA.103.042417,PhysRevLett.127.220501}. Therefore, given that we have already gained access to construct finer GME witnesses than the fidelity-based method, it is natural to investigate their ability on the detection of unfaithful GME states. 

In general, deciding whether an entangled state is unfaithful is a nontrivial task, since one has to prove that the state is not detected by all fidelity witnesses, rather than a certain one. In bipartite case, a necessary and sufficient criterion for a state $\rho_{AB}$ to be unfaithful has been proposed \cite{PhysRevLett.126.140503}, while for multipartite case, it remains an open question on characterization of unfaithfulness. To avoid this difficulty and verify an EW indeed detects unfaithfulness, we limit our attention to a special class of states $\rho(p)=p I/d^n+(1-p) \rho_0$. that there is an upper bound on the white noise tolerance of any fidelity witness for arbitrary state. Denote $\mathcal{W}_F=\alpha I-\rho'$ as an arbitrary fidelity witness, then one can derive its white noise tolerance $p_F$ for arbitrary $\rho(p)$ by solving $Tr(\mathcal{W}_F\rho(p))=0$, which leads to
\begin{equation}
	p_F=\max\{\frac{Tr(\rho_0\rho')-\alpha}{Tr(\rho_0\rho')-1/d^n},0\}.
\end{equation}
Then it is straightforward to see that $p_F\le (1-1/d)/(1-1/d^n)$, due to the fact that $Tr(\rho_0\rho')\le 1$ and $\alpha \ge 1/d$. Hence it can be concluded that an EW can be employed to detect some unfaithful entangled states, as long as its white noise tolerance for some state is higher than $(1-1/d)/(1-1/d^n)$. This is precisely the case for many GME witnesses constructed with our framework. For example, in an $n$-qubit case, this upper bound is $1/(2(1-1/2^n))$ and decreases to $1/2$ as $n$ grows. While our framework provides large amount of EWs with white noise tolerance converging to $1$, allowing for the certification of unfaithfulness of many states in $n$-qubit case.

\subsection{Estimating entanglement measures}
Moreover, a witness operator is useful not only for entanglement certification, but also for entanglement quantification. To start with, we briefly review the method developed in Ref.\ \cite{PhysRevLett.98.110502} for optimally estimating some entanglement measure $E$ given the expectation value of some witness operator $\mathcal{W}$. The task can be described as finding the lower bound 
\begin{equation}
    \epsilon(w)=\inf_{\rho} \{E(\rho)|Tr(\rho \mathcal{W})=w\},
\end{equation}
where the infimum is taken over all states compatible with the data $w=Tr(\rho \mathcal{W})$. Note that $\epsilon(w)$ is a convex function, and thus there exist bounds of the type 
\begin{equation}
    \epsilon(w) \ge r\cdot w-c,
\end{equation}
for an arbitrary $r$. By inserting $w=Tr(\rho \mathcal{W})$ and $E(\rho) \ge \epsilon(w)$, it is observed that 
\begin{equation}
    c \ge r \cdot Tr(\rho \mathcal{W})-E(\rho),
\end{equation}
should be satisfied for any $\rho$. Hence given a "slope" $r$, the optimal constant $c$ is 
\begin{equation}
    c= \hat{E}(r\cdot \mathcal{W}):=\sup_{\rho} \{ r\cdot Tr(\rho \mathcal{W})-E(\rho)\}.
\end{equation}
Finally, an optimal lower bound is obtained after optimizing $r$:
\begin{equation}
    \epsilon(w)=\sup_r \{r\cdot w - \hat{E}(r\cdot \mathcal{W})\}.
\end{equation}
It should be remarked that we limit our discussions into the nontrivial case where a negative expectation $w$ of a witness operator is observed in the following. In this case, the optimal "slope" $r$ should always be negative. 

Now, suppose that the $\mathcal{W}_2$ is a finer EW than the $\mathcal{W}_1$, satisfying $\mathcal{W}_2 \preceq \mathcal{W}_1$. It is straightforward to see that 
\begin{equation}
    \hat{E}(r\cdot \mathcal{W}_1) \ge \hat{E}(r\cdot \mathcal{W}_2).
\end{equation}
Therefore, when these two operators $\mathcal{W}_1$ and $\mathcal{W}_2$ have the same expectation value $w_0$, 
\begin{equation}
	\begin{aligned}
    	\epsilon_2(w_0)= &\sup_r \{r\cdot w_0 - \hat{E}(r\cdot \mathcal{W}_2)\}  \\
    	\ge &\sup_r \{r\cdot w_0 - \hat{E}(r\cdot \mathcal{W}_1)\} \\
    	=& \epsilon_1(w_0).
	\end{aligned}
\end{equation}
For the same target state $\rho$, the expectations $w_1$ and $w_2$ of these two witness operators always satisfy $w_1 \ge w_2$, which leads to $\epsilon_2(w_2) \ge \epsilon_2(w_1) \ge \epsilon_1(w_1)$. That is, a finer EW provides a tighter lower bound on the entanglement measure for the same state. Hence, our framework enables a better estimation of the entanglement measures of genuine multipartite entanglement than the fidelity-based method. 

To quantitatively investigate the improvement from these new GME witness, we discuss the estimation on the geometry measure of genuine multipartite entanglement for noisy $n$-partite $d$-dimensional GHZ states $\rho_{n,d}(p)=pI/d^n+(1-p) |GHZ_n^d\rangle\langle GHZ_n^d|$, with $ |GHZ_n^d\rangle = \sum_{i=0}^{d-1} |i\rangle^{\otimes n}/\sqrt{d}$. For arbitrary multipartite pure state $|\psi\rangle$, its geometric measurement of GME is defined by $E_G(|\psi\rangle) = 1-\max_{|\phi_{bs}\rangle}|\langle\phi_{bs}|\psi\rangle|^2$, with $|\phi_{bs}\rangle$ being arbitrary pure biseparable state. The geometric measure of GME is extended to mixed states by the convex roof construction
\begin{equation}
	E_G(\rho)=\inf \limits_{p_i,|\psi_i\rangle}\sum_i p_i E_G(|\psi_i\rangle),
\end{equation}
where the minimization runs over all possible decompositions $\rho=\sum_i p_i|\psi_i\rangle\langle\psi_i|$. 

Based on the result in Ref.\ \cite{PhysRevApplied.13.054022}, one can derive a lower bound $\epsilon_f^{n,d}(p)$ for $E_G(\rho_{n,d}(p))$
\begin{equation}
	E_G(\rho_{n,d}(p)) \ge \epsilon_f^{n,d}(p) :=1-\gamma(S),
\end{equation}
where $\gamma(S)=[\sqrt{S}+\sqrt{(d-1)(d-S)}]^2/d$ with $S=\max \{1, d(1-p)+p/d^{n-1}\}$. This is just the lower bound related to the fidelity witness $\mathcal{W}_F=I/d-|GHZ_n^d\rangle\langle GHZ_n^d|$. Whereas it has been proved in the previous section that finer EW is accessible with our method, that is,
\begin{equation}
	\begin{aligned}
		\mathcal{W}_{o,|GHZ_n^d\rangle}=&\sum_{\substack{i,j=0,\\i< j}}^{d-1} \sum_{r=1}^{n-1} \sum_m \frac{1}{d}\pi_m(|i\rangle^{\otimes r}|j\rangle^{\otimes n-r})\pi_m(\langle i^{\otimes r}|\langle j|^{\otimes n-r}) \\
		&+\sum_{i=0}^{d-1}\frac{1}{d}|i\rangle\langle i|^{\otimes n}-|GHZ_n^d\rangle\langle GHZ_n^d|.
	\end{aligned}
\end{equation}
With the expectation value $w_{n,d}(p)=Tr(\rho_{n,d}(p)\mathcal{W}_{o,|GHZ_n^d\rangle})$ from this finer EW, a lower bound $\epsilon_o^{n,d}(p)$ can be derived by employing the technique developed in Ref.\ \cite{PhysRevLett.98.110502}:
\begin{equation} \label{lowerbd}
	E_G(\rho_{n,d}(p)) \ge \epsilon_o^{n,d}(p) :=\sup \limits_{r}\left\{r\cdot w_{n,d}(p)-\hat{E}_G(r\mathcal{W}_{o,|GHZ_n^d\rangle})\right\},
\end{equation}
with $r$ being a real number, and
\begin{equation}\label{hatE}
	\hat{E}_G(r\mathcal{W}_{o,|GHZ_n^d\rangle})=\sup\limits_{|\psi\rangle}\sup\limits_{|\phi_{bs}\rangle} \left\{\langle\psi|(r\mathcal{W}_{o,|GHZ_n^d\rangle}+|\phi_{bs}\rangle\langle\phi_{bs}|)|\psi\rangle-1\right\},
\end{equation}
where the maximization runs over all pure state $|\psi\rangle$ and biseparable state $|\phi_{bs}\rangle$. Furthermore, in this special case, it can be verified that one has to choose $|\phi_{bs}\rangle$ as a state having the largest overlap with $|GHZ_n^d\rangle$, which results in
\begin{equation}
	\hat{E}_G(r\mathcal{W}_{o,|GHZ_n^d\rangle})=\frac{1-r}{2}+\frac{1}{2}\sqrt{(1-r)^2+4r\frac{d-1}{d}}+\frac{r}{d}-1.
\end{equation}
By inserting this equation into Eq.\ (\ref{lowerbd}), the lower bound $\epsilon_o^{n,d}(p)$ can be solved directly.

\begin{figure*}[h]
	\centering
	\includegraphics[width=14.85cm,height=11.15cm]{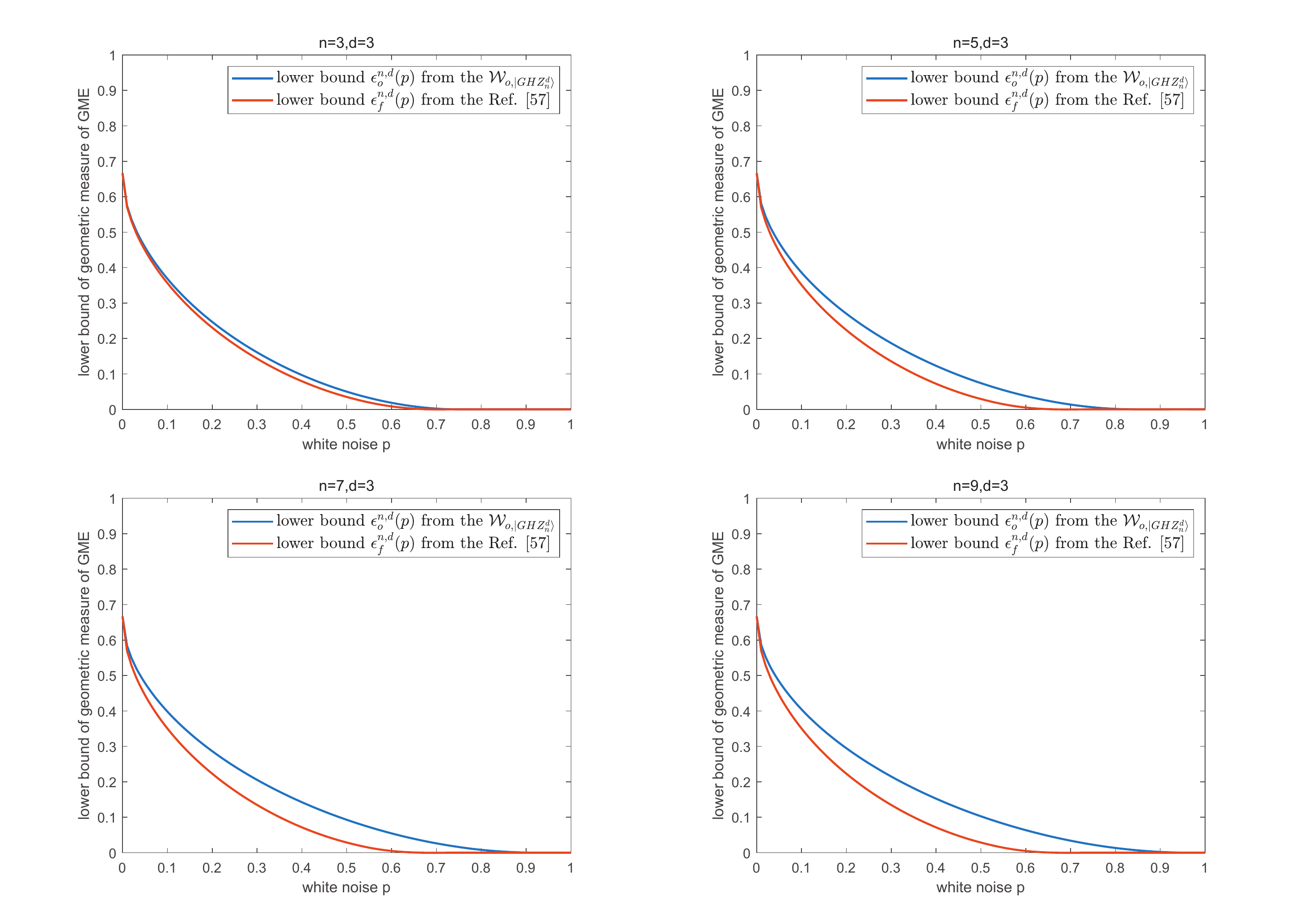}
	\caption{
	In this figure, we choose $d=3$ and $n=3,~5,~7,~9$ as examples to compare the lower bound $\epsilon_o^{n,d}(p)$ and $\epsilon_f^{n,d}(p)$ from different EWs. It is observed that the advantage of EWs in this work become more apparent with increasing system size. Note that for higher local dimensions where $d\ge 4$, the lower bound $\epsilon_o^{n,d}(p)$ and $\epsilon_f^{n,d}(p)$ have similar behavior. While in the qubit case, the EW $\mathcal{W}_{o,|GHZ_n^d\rangle}$ degenerates to the fidelity witness $I/2-|GHZ_n\rangle\langle GHZ_n|$ for $n$-qubit GHZ state $|GHZ_n\rangle$, and $\epsilon_o^{n,2}(p)$ is the same as $\epsilon_f^{n,2}(p)$.
	}
	\label{fig:measure}
\end{figure*}

In Fig.\ \ref{fig:measure}, we have shown the results for $d=3$ and $n=3,~5,~7,~9$ as examples, to illustrate the performance of our method with an increasing system size. As the number of subsystems grows, the critical value of $p$ tends to $1$, when the lower bound $\epsilon_o^{n,d}(p)$ vanishes. Meanwhile, the $\epsilon_o^{n,d}(p)$ is always larger than the $\epsilon_f^{n,d}(p)$ above, which vanishes at $p=1-1/d$ for large $n$. That is, the new EWs $\mathcal{W}_{o,|GHZ_n^d\rangle}$ are able to provide a better estimation on the geometric measure of GME for $\rho_{n,d}(p)$. It remains open whether $\epsilon_o^{n,d}(p)$ equals $E_G(\rho_{n,d}(p))$. However, it is still reasonable to expect that such new GME witnesses can provide faithful estimations on entanglement measures without the need for quantum tomography, as they are already robust enough.

\section{Conclusion and outlook}

In summary, we have developed a exquisite framework and scheme for genuine multipartite entanglement detection, and demonstrated its operability and universality by applying it on typical GME states that arise in practice. In particular, this is achieved using a novel method to bring any complete set of bipartite EWs to a single GME witness. This method allows to make full use of some prior information about the target state to improve the noise resistance. In fact, the resulting GME witnesses turn out to be quite robust, whose white noise tolerance converge to $1$ in many cases. As a consequence, this framework holds great practical potential in real-life situations, especially for detecting entanglement in noisy multipartite or high-dimensional systems. This will play a certain role in facilitating the solution of the very challenging problem of genuine multipartite entanglement detection.

In addition to genuine multipartite entanglement, we remark that our method is highly flexible and admits natural generalizations for detecting other types of entanglement. A relevant case is entanglement detection in quantum networks, which is currently under active investigations. In quantum networks, multipartite entanglement exhibits novel features due to the complex network topology \cite{PhysRevLett.125.240505,PhysRevA.103.L060401,PhysRevLett.128.220501}, and better techniques are urgently needed for the characterization of genuine network multipartite entanglement. Finally, it will also be interesting to seek for further extension of the framework in high-order entanglement detection \cite{PhysRevLett.105.210504} as well as bound entanglement detection.

\section*{Acknowledgments}
We thank Yi-Zheng Zhen for very valuable discussion. This work has been supported by the National Natural Science Foundation of China (Grants No. 62031024, 11874346, 12174375), the National Key R$\&$D Program of China (2019YFA0308700), the Anhui Initiative in Quantum Information Technologies (AHY060200), and the Innovation Program for Quantum Science and Technology (No. 2021ZD0301100).

\bibliographystyle{quantum}
\bibliography{EWbeyondFidelity.bib}

\appendix
\section*{Appendix}

\section{Proof and discussions of the bipartite EW in Eq. (5)}\label{sec:appendix bipartite EW} 

\subsection {A class of bipartite entanglement witness}

Let $|\phi\rangle$ be an arbitrary pure entangled state in the $d\times d$ dimensional Hilbert space $\mathcal{H}_d\otimes\mathcal{H}_d$. Without loss of generality, one can assume $|\phi\rangle=\sum_{i=0}^{d-1}\sqrt{\lambda_i}|ii\rangle$, where all $\lambda_i\ge 0$ are Schmidt coefficients in decreasing order and $\sum_i \lambda_i=1$. One can define a positive operator $Q$ as
\begin{equation}\label{eq:Q}
	Q=\sum_{\substack{i,j=0,\\i<j}}^{d-1} \sqrt{\lambda_i\lambda_j}(|ij\rangle-|ji\rangle)(\langle ij|-\langle ji|),
\end{equation}
which can be used for constructing an EW for $|\phi\rangle$.
\begin{lemma}
	The partial transpose of $Q$ provides an optimal EW $\mathcal{W}_o^{|\phi\rangle}$, which $\mathcal{W}_o^{|\phi\rangle}$ reads
	\begin{equation}
		\mathcal{W}_o^{|\phi\rangle}=Q^{\Gamma}=\sum_{i,j=0}^{d-1} \sqrt{\lambda_i\lambda_j}|ij\rangle\langle ij|-|\phi\rangle\langle\phi|
	\end{equation}
\end{lemma}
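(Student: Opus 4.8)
The plan is to establish the lemma in three stages: (i) verify the displayed formula for $Q^{\Gamma}$ by a direct computation of the partial transpose; (ii) show the resulting operator $\mathcal{W}_o^{|\phi\rangle}$ is a genuine (nontrivial) entanglement witness that detects $|\phi\rangle$; and (iii) prove it is optimal by exhibiting enough product vectors on which it vanishes.

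For (i), I would expand each rank-one summand as $(|ij\rangle-|ji\rangle)(\langle ij|-\langle ji|) = |ij\rangle\langle ij| + |ji\rangle\langle ji| - |ij\rangle\langle ji| - |ji\rangle\langle ij|$ and apply the partial transpose on the second subsystem, which acts by $|ab\rangle\langle cd| \mapsto |ad\rangle\langle cb|$: the first two terms are left fixed, while the two cross terms become $|ii\rangle\langle jj|$ and $|jj\rangle\langle ii|$. Summing over $i<j$ and using the identity $\sum_i \lambda_i |ii\rangle\langle ii| + \sum_{i<j}\sqrt{\lambda_i\lambda_j}\,(|ii\rangle\langle jj|+|jj\rangle\langle ii|) = |\phi\rangle\langle\phi|$ collapses the expression to $\sum_{i,j}\sqrt{\lambda_i\lambda_j}\,|ij\rangle\langle ij| - |\phi\rangle\langle\phi|$. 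This step is pure index bookkeeping. For (ii), note $Q \succeq 0$ since it is a nonnegative combination of positive operators, so for every separable $\rho$ one has $Tr(\mathcal{W}_o^{|\phi\rangle}\rho) = Tr(Q^{\Gamma}\rho) = Tr(Q\,\rho^{\Gamma}) \ge 0$ by the PPT property of separable states; moreover $\langle\phi|\mathcal{W}_o^{|\phi\rangle}|\phi\rangle = \sum_i \lambda_i^2 - 1 < 0$, the strict inequality holding because entanglement of $|\phi\rangle$ forces at least two nonzero Schmidt coefficients, so $\mathcal{W}_o^{|\phi\rangle}$ is not positive semidefinite and detects $|\phi\rangle$.

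For (iii), I would invoke the standard optimality criterion of Ref.~\cite{lewenstein2000optimization}: a witness is optimal once the product vectors $|e\rangle|f\rangle$ with $\langle e,f|\mathcal{W}_o^{|\phi\rangle}|e,f\rangle = 0$ span the whole Hilbert space. Writing $|e\rangle=\sum_i e_i|i\rangle$ and $|f\rangle=\sum_j f_j|j\rangle$, one finds $\langle e,f|\mathcal{W}_o^{|\phi\rangle}|e,f\rangle = \big(\sum_i\sqrt{\lambda_i}\,|e_i|^2\big)\big(\sum_j\sqrt{\lambda_j}\,|f_j|^2\big) - \big|\sum_k\sqrt{\lambda_k}\,e_k f_k\big|^2$, which is nonnegative by the Cauchy--Schwarz inequality with weights $\sqrt{\lambda_k}$, with equality precisely when $|f\rangle \propto |\bar e\rangle$ (in the full Schmidt-rank case; additional vectors appear when some $\lambda_k = 0$). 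Since the operators $\sum_{i,j}e_i\bar e_j\,|i\rangle\langle j|$ range over all of $M_d(\mathbb{C})$ as $|e\rangle$ varies — by polarization one recovers every $|i\rangle\langle j|$ — the family $\{\,|e\rangle\otimes|\bar e\rangle\,\}$ spans $\mathcal{H}_d\otimes\mathcal{H}_d$, and the spanning criterion yields optimality.

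The main obstacle is stage (iii): one has to pin down the equality case of the weighted Cauchy--Schwarz inequality correctly, handle the degenerate indices where $\lambda_k = 0$ (checking the zero set only grows and still spans), and then verify the spanning property of the rank-one operators. Stages (i) and (ii) are routine once the partial-transpose conventions are fixed.
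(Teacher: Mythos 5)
Your proposal is correct and follows essentially the same route as the paper: non-negativity on separable states via $Tr(Q^{\Gamma}\rho)=Tr(Q\rho^{\Gamma})\ge 0$ together with $\langle\phi|\mathcal{W}_o^{|\phi\rangle}|\phi\rangle=\sum_i\lambda_i^2-1<0$, and optimality via the spanning criterion for the zero-set product vectors. The only cosmetic difference is in how that spanning set is produced: you characterize the full equality case of the weighted Cauchy--Schwarz bound as $\{|e\rangle\otimes|\bar e\rangle\}$ and invoke polarization, whereas the paper simply exhibits, for each index pair $e<f$, the four explicit product vectors $|ee\rangle$, $|ff\rangle$, $(|e\rangle+|f\rangle)(|e\rangle+|f\rangle)/2$, $(|e\rangle+i|f\rangle)(|e\rangle-i|f\rangle)/2$ --- which are precisely instances of your family --- and checks they span each $2\times 2$ block.
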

\begin{proof}
To prove that the $\mathcal{W}_o^{|\phi\rangle}$ is an EW, note that it is of the form $Q^{\Gamma}$ with $Q$ being positive semidefinite ($Q\succeq 0$). Thus for all separable states $Tr(\rho_{sep}\mathcal{W}_o^{|\phi\rangle}) = Tr(\rho_{sep}^{\Gamma}Q)\ge 0$. Meanwhile, $Tr(\mathcal{W}_o^{|\phi\rangle}|\phi\rangle\langle\phi|)=\sum_{i}\lambda_i^2-1=-\sum_{i\ne j}\lambda_i\lambda_j<0$. Then $\mathcal{W}_o^{|\phi\rangle}$ is an EW by definition.

To show the optimality of $\mathcal{W}_o^{|\phi\rangle}$, it is sufficient to prove that the set of pure separable states $\{|\phi_1\rangle\otimes|\phi_2\rangle\}$ satisfying $\langle\phi_1|\langle\phi_2|\mathcal{W}_o|\phi_1\rangle|\phi_2\rangle=0$ span the whole Hilbert space $\mathcal{H}_d\otimes\mathcal{H}_d$ \cite{PhysRevA.62.052310}. For qubit case, one has $\mathcal{W}_o^{(2)}=\sqrt{\lambda_0\lambda_1}(|01\rangle-|10\rangle)(\langle 01|-\langle10|)^{\Gamma}$. It is easy to verify that the set of separable states $\{|00\rangle,~(|0\rangle+|1\rangle)(|0\rangle+|1\rangle)/2,~(|0\rangle+i|1\rangle)(|0\rangle-i|1\rangle)/2,~|11\rangle\}$ satisfying $Tr(\rho_{sep}\mathcal{W}_o^{(2)}) = 0$. This set of states span the whole Hilbert space $\mathcal{H}_2\otimes\mathcal{H}_2$. In fact, it has been shown that any decomposable EW acting on $\mathcal{H}_2\otimes\mathcal{H}_d$ is optimal iff it takes the form $\mathcal{W}=Q^{\Gamma}$ for some $Q\succeq 0$ \cite{Augusiak_2011}.

Similarly, in the qudit case ($d>2$), there exist separable states $\{|ee\rangle,~(|e\rangle+|f\rangle)(|e\rangle+|f\rangle)/2,~(|e\rangle+i|f\rangle)(|e\rangle-i|f\rangle)/2,~|ff\rangle\}$ satisfying $Tr(\rho_{sep}\mathcal{W}_o^{|\phi\rangle})=0$, for each pair $0\le e< f \le d-1$. These states span the same space with $\{|ee\rangle,~|ef\rangle,~|fe\rangle,~|ff\rangle\}$. By iterating over all $e<f$, one ends up with a set of separable states spanning the whole space $\mathcal{H}_d\otimes\mathcal{H}_d$. Thus the EW $\mathcal{W}_o^{|\phi\rangle}$ is optimal. This finishes the proof.
\end{proof}

\subsection{Detection of bipartite unfaithful state}

Remarkably, for the $|\phi\rangle$, the most widely used fidelity witness reads $\mathcal{W}_F^{|\phi\rangle} = \lambda_0 I -|\phi\rangle\langle\phi|$. It is straightforward to observe that $\mathcal{W}_F^{|\phi\rangle} -\mathcal{W}_o^{|\phi\rangle}\succeq 0$, which means that the $\mathcal{W}_o^{|\phi\rangle}$ is finer than the $\mathcal{W}_F^{|\phi\rangle}$. This leads to a byproduct that the $\mathcal{W}_o^{|\phi\rangle}$ can detect unfaithful states. Unfaithful states are entangled states which can not be detected by all fidelity witnesses \cite{PhysRevLett.124.200502}, namely, an entangled state $\rho$ is unfaithful if and only if $Tr(\rho W_F^{|\psi\rangle}) \ge 0$ for all $|\psi\rangle$. Therefore, the relationship $\mathcal{W}_F^{|\phi\rangle} -\mathcal{W}_o^{|\phi\rangle} \succeq 0$ itself is not sufficient to demonstrate that the extra entangled states detected by $\mathcal{W}_o^{|\phi\rangle}$ is unfaithful. And a further clarification is required to justify the statement that $\mathcal{W}_o^{|\phi\rangle}$ detects unfaithful state.

Now we would like to provide qualitative and quantitative characterization on the ability to detect unfaithfulness of the $\mathcal{W}_{o}^{|\phi\rangle}$. Consider the class of states $\rho_{|\phi\rangle}(p)=pI/d^2+(1-p)|\phi\rangle\langle\phi|$. From the Observation 1 in Ref.\ \cite{PhysRevLett.126.140503}, it is known that such states are faithful if and only if it is detected by $\mathcal{W}_m=I/d-|\phi_d^+\rangle\langle\phi_d^+|$, with $|\phi_d^+\rangle$ being the maximally entangled state $\sum_{i=0}^{d-1}1/\sqrt{d}|ii\rangle$. By solving $Tr(\rho_{|\phi\rangle}(p)\mathcal{W}_{m})=0$, we obtain that the white noise tolerance of $\mathcal{W}_m$ for $|\phi\rangle$ is
\begin{equation}
	p_f^{|\phi\rangle}=\frac{\sum_{i,j=0}^{d-1}\sqrt{\lambda_i\lambda_j}-1}{\sum_{i,j=0}^{d-1}\sqrt{\lambda_i\lambda_j}-\frac{1}{d}}.
\end{equation}
That is, $\rho_{|\phi\rangle}(p)$ is faithful when $p<p_f^{|\phi\rangle}$.
Similarly, one can obtain the white noise tolerance of $\mathcal{W}_o^{|\phi\rangle}$ for $|\phi\rangle$, which is
\begin{equation}
	p_o^{|\phi\rangle}=\frac{1-\sum_{i=0}^{d-1}\lambda_i^2}{1-\sum_{i=0}^{d-1}\lambda_i^2+\frac{1}{d^2}(\sum_{i,j=0}^{d-1}\sqrt{\lambda_i\lambda_j}-1)}.
\end{equation}
It can be observed that
\begin{equation}
	\begin{aligned}
		\frac{1/p_{o}^{|\phi\rangle}-1}{1/p_{f}^{|\phi\rangle}-1}=&\frac{(\sum_{i,j=0}^{d-1}\sqrt{\lambda_i\lambda_j}-1)^2}{d(d-1)(1-\sum_i \lambda_i^2)} \\
		=&1+\frac{(\sum_{i\ne j}\sqrt{\lambda_i\lambda_j})^2-d(d-1)(1-\sum_i \lambda_i^2)}{d(d-1)(1-\sum_i \lambda_i^2)} \\
  		\le&1+\frac{d(d-1)(\sum_{i\ne j}\lambda_i\lambda_j)-d(d-1)(1-\sum_i \lambda_i^2)}{d(d-1)(1-\sum_i \lambda_i^2)} \\
		=&1+\frac{d(d-1)((\sum_{i=0}^{d-1}\lambda_i)^2-1)}{d(d-1)(1-\sum_i \lambda_i^2)}=1.
	\end{aligned}
\end{equation}
In other words, $p_{f}^{|\phi\rangle} \le p_{o}^{|\phi\rangle}$, where the inequality comes from the Cauchy–Schwarz inequality, and takes equality if $d=2$ or $\lambda_i=1/d$ for all $i$. Therefore, the $\mathcal{W}_o^{|\phi\rangle}$ can always detect unfaithful state $\rho_{|\phi\rangle}(p)$ for $p\in[p_f^{|\phi\rangle},~p_o^{|\phi\rangle})$, unless $|\phi\rangle$ being a two-qubit state or maximally entangled.

\begin{table}[t]
	\centering
	\begin{tabular}{|c |c |c |c |c |c |}
		\hline
		d &3 &4&5&6&7 \\
		\hline
		$l_d$&0.2679&0.4202&0.5195&0.5896&0.6624\\
		\hline
	\end{tabular}
	\caption{
	Maximal unfaithful length $l_d$ from the class of EWs $\mathcal{W}_o^{|\phi\rangle}$. We remark that the optimization of $l_d$ may arrive at a local maximum. We use enough random starting points to support the claim that we arrive at the global maximum.
	}
	\label{tab:unfaithful_max}
\end{table}

As a quantitative investigation, we numerically maximize the interval length $l_d$ of $[p_f^{|\phi\rangle},~p_o^{|\phi\rangle})$ over all $|\phi\rangle$ for different local dimension $d$. We name $l_d$ the maximal unfaithful length from the class of EWs $\mathcal{W}_o^{|\phi\rangle}$, and the results are listed in Table. \ref{tab:unfaithful_max} for $d=3,4,\cdots,7$. It can be seen that $l_d$ grows significantly with an increasing dimension $d$, indicating that the $\mathcal{W}_o^{|\phi\rangle}$ can greatly outperform the fidelity witness. This is also in agreement with the statement that most states are unfaithful as claimed in Ref.\ \cite{PhysRevLett.124.200502}. 

Except for the $l_d$, one may be also interested in the average performance of this new class of EWs on unfaithfulness detection. As a comparison, it is natural to consider two interval $[p_e^{|\phi\rangle}, p_f^{|\phi\rangle})$ and $[p_o^{|\phi\rangle}, p_f^{|\phi\rangle})$, where the $p_e^{|\phi\rangle}$ is the critical value such that $\rho_{|\phi\rangle}(p)$ becomes separable. The former interval contains all unfaithful $\rho_{|\phi\rangle}(p)$, while the latter contains the part that can be detected by the class of $\mathcal{W}_o^{|\phi\rangle}$. Then one can use $avg_{|\phi\rangle} (p_o^{|\phi\rangle}-p_f^{|\phi\rangle})/(p_e^{|\phi\rangle}-p_f^{|\phi\rangle})$ to evaluate the average performance of $\mathcal{W}_o^{|\phi\rangle}$ for detecting unfaithfulness, as shown in Table. \ref{tab:unfaithful_avg}. It is observed that a large percentage of unfaithful states have been detected. This is also the premise that GME witnesses constructed from this class of bipartite EWs can detect multipartite unfaithful state.

\begin{table}[hbtp]
	\centering
	\begin{tabular}{|c |c |c |c |c |c |}
		\hline
		d &3 &4&5&6&7 \\
		\hline
		$avg_{|\phi\rangle} (p_o^{|\phi\rangle}-p_f^{|\phi\rangle})$ &0.0804&0.0969&0.0963&0.0909&0.0848\\
		\hline
		$avg_{|\phi\rangle} (p_e^{|\phi\rangle}-p_f^{|\phi\rangle})$ &0.1190&0.1460&0.1457&0.1379&0.1286\\
		\hline 
		$avg_{|\phi\rangle} \frac{p_o^{|\phi\rangle}-p_f^{|\phi\rangle}}{p_e^{|\phi\rangle}-p_f^{|\phi\rangle}}$ &0.5605&0.5937&0.6089&0.6181&0.6248\\
		\hline
	\end{tabular}
	\caption{
	Average performance of $\mathcal{W}_o^{|\phi\rangle}$ for detecting unfaithfulness. For different local dimension $d$, the average is taken by randomly generating $10^7$ pure states in $\mathcal{H}_d \otimes \mathcal{H}_d$. Since any pure bipartite state admits a Schmidt decomposition $|\phi\rangle = \sum_i \sqrt{\lambda_i} |ii\rangle$, we replace the randomly generated pure bipartite states with random vectors $(\sqrt{\lambda_0},\cdots,\sqrt{\lambda_{d-1}})$ uniformly distributed on the $d$-dimensional unit sphere. Moreover, the critical value $p_e^{|\phi\rangle}$ is $\frac{d^2\sqrt{\lambda_0\lambda_1}}{1+d^2\sqrt{\lambda_0\lambda_1}}$ according to the results in Ref.\ \cite{PhysRevA.59.141}, assuming that the Schmidt coefficients $\lambda_i$ are in decreasing order.}
	\label{tab:unfaithful_avg}
\end{table}

\subsection{Generalization of Lemma 1}\label{sec:appendix generalization}

Finally, we provide a generalization of Lemma 1. For the above entangled state$|\phi\rangle$, one can construct another positive operator
\begin{equation}\label{eq:general_Q}
	\tilde{Q}=\sum_{\substack{i,j=0,\\i<j}}^{d-1} (\alpha_{ij}|ij\rangle-\beta_{ij}|ji\rangle)(\alpha_{ij}\langle ij|-\beta_{ij}\langle ji|),
\end{equation}
instead of $Q$, where $\alpha_{ij}\beta_{ij}=\sqrt{\lambda_i\lambda_j}$ and the $\alpha_{ij},\beta_{ij}$ are all positive. The operator $\tilde{\mathcal{W}}_o=\tilde{Q}^{\Gamma}$ is also optimal EW and applicable in our framework for GME witness construction. Here, the proof of the optimality of $\tilde{\mathcal{W}}_o$ is similar to the case in Lemma 1. It is sufficient to verify that the set of state $\{|ee\rangle,~|ff\rangle,~(\sqrt{\beta_{ef}}|e\rangle + \sqrt{\alpha_{ef}}|f\rangle)\otimes(\sqrt{\alpha_{ef}}|e\rangle + \sqrt{\beta_{ef}}|f\rangle),~(\sqrt{\beta_{ef}}|e\rangle + i\sqrt{\alpha_{ef}}|f\rangle)\otimes(\sqrt{\alpha_{ef}}|e\rangle - i\sqrt{\beta_{ef}}|f\rangle)\}_{e,f=0}^{d-1}$ have zero expectation value when measured with $\tilde{\mathcal{W}}_o$, and span the whole $d^2$-dimensional Hilbert space.

\section{Proof of the examples} 

In this section, we will show explicitly how this construction can be applied to some commonly used multipartite entangled states, and make further discussions on the results.

\subsection{$W$-state}\label{sec:appendix w state}

The $W$-state is an important class of multiqubit entangled states. A class of EWs for $W$-state which can outperform significantly than the fidelity witness has been proposed in Ref.\ \cite{Bergmann_2013}. In Ref.\ \cite{Bergmann_2013}, the authors construct an operator at first, and then prove that this operator is decomposable bipartite EW with respect to all possible bipartitions. While our construction is in the opposite direction. We construct a complete set of bipartite EWs for $W$-state, and lift them to a GME witness. Although different method has been used, our construction recovers the result in Ref.\ \cite{Bergmann_2013}.

We start with the simplest 3-qubit case, where the target state is 
\begin{equation}
      |W_3\rangle=\frac{1}{\sqrt{3}}(|001\rangle+|010\rangle+|100\rangle).
    \end{equation}
For the bipartition $1|23$, the EW constructed from Lemma 1 is of the form
\begin{equation}
	\begin{aligned}
    	\mathcal{W}_{1|23}^{|W_3\rangle}
    	&=\frac{\sqrt{2}}{3}(|000\rangle\langle000|+|1\psi^+\rangle\langle1\psi^+|) \\
    	&~~~~+\frac{2}{3}|0\psi^+\rangle\langle0\psi^+|+\frac{1}{3}|100\rangle\langle100|-|W_3\rangle\langle W_3|,
    \end{aligned}
\end{equation}
with $|\psi^+\rangle=(|01+10\rangle)/\sqrt{2}$. And for the other two bipartitions, the $\mathcal{W}_{2|13}$ and $\mathcal{W}_{3|12}$ can be obtained after permutation of qubits. Then for $|W_3\rangle$, the set $\mathcal{S}$ reads
\begin{equation}
	\begin{aligned}
    	\mathcal{S}=\{&|000\rangle\langle000|,|001\rangle\langle001|,~|010\rangle\langle010|,~|100\rangle\langle100|,\\
        &|0\psi^+\rangle\langle0\psi^+|,~|0_2\psi_{13}^+\rangle\langle0_2\psi_{13}^+|,~|\psi^+0\rangle\langle\psi^+0|,\\
        &|1\psi^+\rangle\langle1\psi^+|,~|1_2\psi_{13}^+\rangle\langle1_2\psi_{13}^+|,~|\psi^+1\rangle\langle\psi^+1|\}.
    \end{aligned}
\end{equation}
These states in $\mathcal{S}$ can be grouped into 3 subsets according to the procedure in the main text:
\begin{equation}
	\begin{aligned}
    	\mathcal{S}_1&=\{|000\rangle\langle000|\},\\
    	\mathcal{S}_2&=\{|001\rangle\langle001|,~|010\rangle\langle010|,~|100\rangle\langle100|,\\
    	&~~~~~~~|0\psi^+\rangle\langle0\psi^+|,~|0_2\psi_{13}^+\rangle\langle0_2\psi_{13}^+|,~|\psi^+0\rangle\langle\psi^+0|\}\\
    	\mathcal{S}_3&=\{|1\psi^+\rangle\langle1\psi^+|,~|1_2\psi_{13}^+\rangle\langle1_2\psi_{13}^+|,~|\psi^+1\rangle\langle\psi^+1|\},
	\end{aligned}
\end{equation}
and the corresponding $\alpha_k$ by Theorem 1 is
\begin{equation}
    \alpha_1=\sqrt{2}/3, ~\alpha_2=2/3,~\alpha_3=\sqrt{2}/3
\end{equation}
respectively. This result in a GME witness
\begin{equation}
	\begin{aligned}
    	\mathcal{W}_{|W_3\rangle}=&\frac{\sqrt{2}}{3}(|000\rangle\langle000|+|101\rangle\langle101|+|011\rangle\langle011|+|110\rangle\langle110|)\\
    	&+\frac{2}{3}(|001\rangle\langle001|+|010\rangle\langle010|+|100\rangle\langle100|)-|W_3\rangle\langle W_3|.
    \end{aligned}
\end{equation}
Moreover, by employing the generalization of Lemma 1 in Eq.\ (\ref{eq:general_Q}), one obtains
\begin{equation}
    \mathcal{W}'_{1|23}=\left[(a|0\rangle|00\rangle-b|1\rangle|\psi^+\rangle)(a\langle0|\langle00|-b\langle1|\langle\psi^+|)\right]^{\Gamma_1},    
\end{equation}
where $a$, $b$ are positive numbers and satisfy $ab=\sqrt{2}/3$. The other two bipartite EWs are obtained immediately after rearrangement of the qubits. For this set of bipartite EWs, the EW $\mathcal{W}_{|W_3\rangle}$ can be generalized into
\begin{equation}
	\begin{aligned}
    	\mathcal{W}_{|W_3\rangle}'=&a^2|000\rangle\langle000|+b^2(|101\rangle\langle101|+|011\rangle\langle011|+|110\rangle\langle110|)\\
    	&+\frac{2}{3}(|001\rangle\langle001|+|010\rangle\langle010|+|100\rangle\langle100|)-|W_3\rangle\langle W_3|.
    \end{aligned}
\end{equation}

In $n$-qubit cases, if a subsystem $A$ contains $m$ qubits, the corresponding bipartite EW from the Lemma 1 is of the form ($1\le m\le n-1$)
\begin{equation}
      \mathcal{W}_{m|n-m}=\sqrt{\frac{m(n-m)}{n^2}}|\psi\rangle_{m|n-m}\langle\psi|^{\Gamma_A},
\end{equation}
where $|\psi\rangle_{m|n-m}={|0\rangle^{\otimes m}}_A{|0\rangle^{\otimes n-m}}_{\bar{A}}-|W_m\rangle_A|W_{n-m}\rangle_{\bar{A}}$. Then the set $\mathcal{S}$ for $|W_n\rangle$ can still be grouped into 3 subsets:
\begin{equation}
    \{|0^{\otimes n}\rangle\},~\{|\pi_m(0^{\otimes n-1}1)\rangle\},~\{|\pi_m'(0^{\otimes n-2}1^{\otimes 2})\rangle\},
\end{equation}
with the corresponding coefficients $\alpha_k$ being
\begin{equation}
    \begin{aligned}
      \alpha_1&=\max_m \sqrt{\frac{m(n-m)}{n^2}}=\frac{\sqrt{\lfloor n/2\rfloor(n-\lfloor n/2\rfloor)}}{n},\\
      \alpha_2&=\max_m \frac{n-m}{n}=\frac{n-1}{n}, \\
      \alpha_3&=\max_m \sqrt{\frac{m(n-m)}{n^2}}=\frac{\sqrt{\lfloor n/2\rfloor(n-\lfloor n/2\rfloor)}}{n}.
    \end{aligned}
\end{equation}
Therefore we arrive at the following $\mathcal{W}_{|W_n\rangle}$,
\begin{equation} \label{eq:w_state}
	\mathcal{W}_{|W_n\rangle}=\frac{n-1}{n}\mathcal{P}_1+\frac{\sqrt{\lfloor n/2\rfloor(n-\lfloor n/2\rfloor)}}{n}
	(|0\rangle\langle 0|^{\otimes n} +\mathcal{P}_2)-|W_n\rangle\langle W_n|,
\end{equation}
with $\mathcal{P}_i=\sum_m\pi_m(|0\rangle^{\otimes n-i}|1\rangle^{\otimes i})\pi_m(\langle 0|^{\otimes n-i}\langle 1|^{\otimes i})$, where the summation $m$ is over all possible permutation of $|0\rangle^{\otimes n-i}|1\rangle^{\otimes i}$.

The EW $\mathcal{W}_{|W_n\rangle}$ can also be generalized in a similar manner with the $\mathcal{W}_{|W_3\rangle}$, so as to recover the results of Ref.\ \cite{Bergmann_2013}. Although ending up with the same witness operator, our construction provides a different insight on why the $\mathcal{W}_{|W_n\rangle}$ takes such a form.

\subsection{Graph states}\label{sec:appendix graph state} 

Before discussing the construction of GME witnesses for the graph states, we first give a brief introduction to the graph states. A graph is a pair $G=(V,E)$ of sets, where the elements of $V$ are called vertices, and the elements of $E$ are edges connecting the vertices. For example, $(1,2) $ represents the edge connecting vertex $1$ and $2$. Two vertices are called neighboring if they are connected by an edge. A graph can also be represented by the adjacency matrix $\Gamma$ with
\begin{equation}
	\Gamma_{ij} = \left\{
	\begin{aligned}
  		&1,  \quad if ~(v_i,~v_j) \in E,\\
  		&0,  \quad otherwise.
	\end{aligned}
	\right.
\end{equation}
Then, an $n$-qubit graph state $|G\rangle$ is defined with an $n$-vertex graph $G$ whose vertices correspond to qubits and edges correspond to control-Z (C-Z) gate between two qubits. Graph state can be expressed with a set of stabilizers
\begin{equation}
	g_i = X_i\prod_{j \in N(i)} Z_j, i = 1,\cdots,n,
\end{equation}
where $X_i$ and $Z_i$ are Pauli operators on qubit (vertex) $i$, and $N(i)$ is the neighborhood of $i$ (\textit{i.e.} the set of vertices directly connected to $i$ by edges). These operators $g_i$ commute with each other and $|G\rangle$ is the common eigenstate of them such that
\begin{equation}
	\forall i=1,\cdots,n,~g_i|G\rangle=|G\rangle,
\end{equation}
Moreover, all the $2^n$ common eigenstates of these $g_i$ form a basis named graph state basis. Each term in this basis is uniquely decided by the eigenvalues of $g_i$. As the eigenvalues of $g_i$ are either $1$ or $-1$, the graph state basis can be denoted by a vector $\vec{a} \in \{0,1\}^n$ such that
\begin{equation}\label{eq:graph_state_basis}
	\forall i=1,\cdots,n,~g_i|a_1\cdots a_n\rangle_G = (-1)^{a_i}|a_1\cdots a_n\rangle_G.
\end{equation}
And the density matrix of $|\vec{a}\rangle_G$ is
\begin{equation}
	|a_1\cdots a_n\rangle_G\langle a_1\cdots a_n|=\prod_{i=1}^{n}\frac{(-1)^{a_i}g_i +I}{2}.
\end{equation}
Specially, the graph state $|G\rangle$ is denoted as $|0\cdots0\rangle_G$.

Remarkably, by choosing the graph state basis instead of the computational basis, the calculation of GME witness construction can be greatly simplified, without needing to perform the Schmidt decomposition. Firstly, the partial transposition of a graph state is diagonal under the graph state basis, namely, $|\vec{a}_0\rangle_G\langle \vec{a}_0|^{T_A}$ is of the form $\sum_{\vec{a}} c_{\vec{a}} |\vec{a}\rangle_G\langle \vec{a}|$. Meanwhile, the operator $Q$ in Eq.\ (\ref{eq:Q}) can be seen as a linear combination of all the eigenstates with negative eigenvalue of $|G\rangle\langle G|^{T_A}$. Therefore, when Lemma 1 is applied to the graph state $|G\rangle$, the resulting bipartite EW $\mathcal{W}_{o,A|\bar{A}}^{|G\rangle}$ is diagonal in the graph state basis. In this case, the vectors in the set $\mathcal{S}$ can be taken as the base vectors $|\vec{a}\rangle_G$, such that the construction in Theorem 1 is easy to achieve. In the following, we propose an explicit procedure for finding the decomposition of $\mathcal{W}_{o,A|\bar{A}}^{|G\rangle}$ in the graph state basis.

Firstly, for the given bipartition $A|\bar{A}$, the adjacency matrix $\Gamma$ can be decomposed into following blocks
\begin{equation}\label{eq:adj_matrix}
	\left(
	\begin{aligned}
		&G_A~~\Gamma_{A|\bar{A}} \\
		&\Gamma_{A|\bar{A}}~~G_{\bar{A}}
	\end{aligned}
	\right).
\end{equation}
We denote $k=rank(\Gamma_{A|\bar{A}})$ as the rank of the submatrix $\Gamma_{A|\bar{A}}$. It is known that a graph state can be transformed into tensor product of $k$ Bell states across the partitions $A$ and $\bar{A}$, using C-Z gates within each partition and local complementation operations \cite{PhysRevA.69.062311}. Here the local complementation $\tau_a$ on a vertex $a$ is defined as follows: $\tau_a : G \to \tau_a(G)$, such that the edge set $E'$ of the new graph $\tau_a(G)$ is $E'= E\cup E\left(N\left(a\right),N\left(a\right)\right)-E\cap E\left(N\left(a\right),N\left(a\right)\right)$. The local complementation $\tau_a(G)$ can be implemented with the following local unitary operation \cite{PhysRevA.69.062311}:
\begin{equation}
	U_a(G) = (-iX_a)^{1/2}\prod_{b\in N(a)}(iZ_b)^{1/2}.
\end{equation}
After this operation, $|G\rangle$ is turned into $|\tau_a(G)\rangle$ and the stabilizers of $|G\rangle$ transform according to the following equations:
\begin{equation}\label{eq:stabizer_transfer}
	\begin{aligned}
		&U_a(G)g_b^GU_a(G)=g_a^{\tau_a(G)}g_b^{\tau_a(G)},~if ~b\in N(a); \\
		&U_a(G)g_b^GU_a(G)=g_b^{\tau_a(G)}, ~~~~~~~~~if ~b\notin N(a).
	\end{aligned}
\end{equation}
Meanwhile, we remark that a bipartite EW $\mathcal{W}_{A|\bar{A}}$ for $|G\rangle$ has been transformed into another bipartite EW $\mathcal{W}_{A|\bar{A}}'$ for $|G'\rangle$ after some local unitary operation with respect to $A|\bar{A}$. Hence our task for constructing bipartite EW of the initial graph state $|G\rangle$ has been turned into finding a bipartite EW for $|Bell\rangle^{\otimes k}$ by employing Lemma 1, a much easier task compared with the initial one.

Secondly, after reversing the above transformation process from $|G\rangle$ to $|Bell\rangle^{\otimes k}$, the EW for $|Bell\rangle^{\otimes k}$ which is diagonal in Bell state basis will be turned back into a bipartite EW for $|G\rangle$ which is diagonal in graph state basis.

\begin{figure}[h]
	\centering
	\includegraphics[width=14.72cm,height=4.54cm]{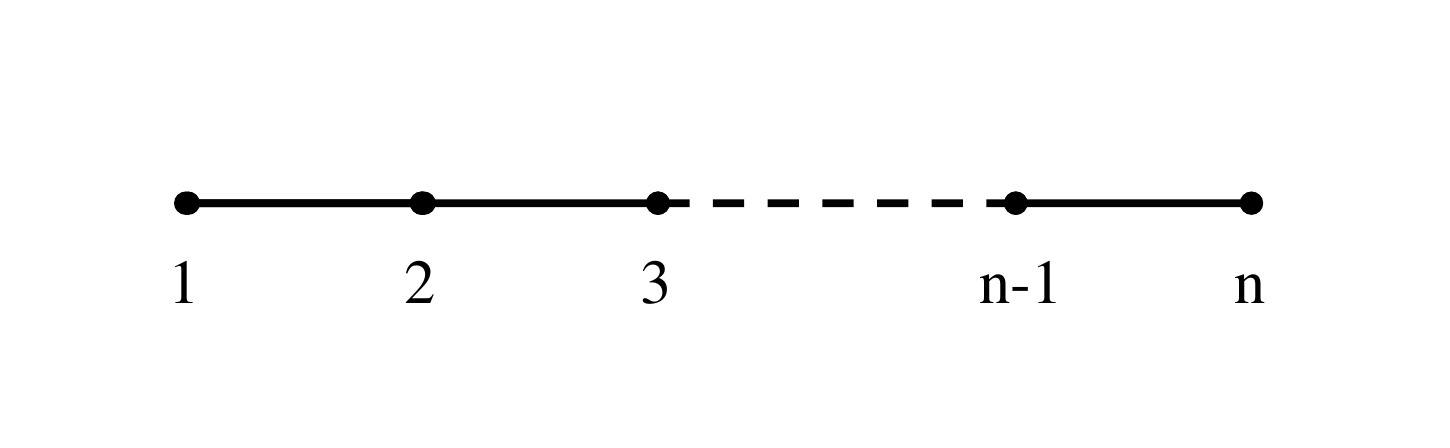}
	\caption{Representation of $n$-qubit linear cluster state with graph, where $n$ qubits are connected one by one with C-Z gates.}
	\label{fig:appendix_fig1}
\end{figure}
With the above foundation, we move on to a explicit discussion on a typical class of graph states: linear cluster state $|Cl_n\rangle$. Linear cluster state is represented with the graph in Fig.\ \ref{fig:appendix_fig1}.

We call the bipartition $A|\bar{A}$ a rank-$k$ bipartition if $rank(\Gamma_{A|\bar{A}})=k$, with the $\Gamma_{A|\bar{A}}$ defined in Eq.\ (\ref{eq:adj_matrix}). All rank-$1$ bipartitions of linear cluster state have only two possible types of the subgraph on the boundary $\Gamma_{A|\bar{A}}$ (Fig.\ \ref{fig:appendix_fig2}).
\begin{figure}[h]
	\centering
	\includegraphics[width=9.11cm,height=5.95cm]{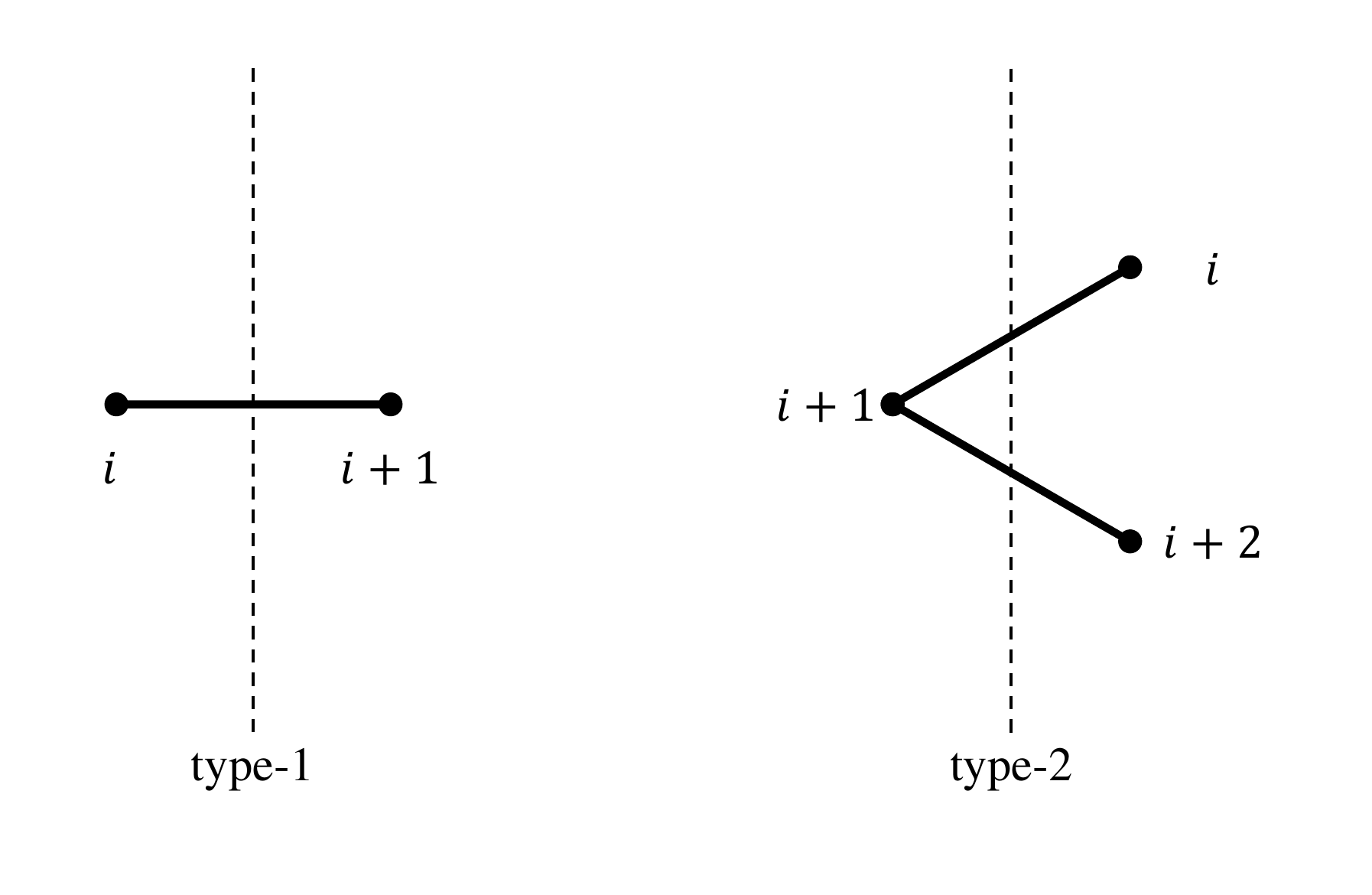}
	\caption{Possible subgraphs on the boundary across rank-1 bipartition. For a given bipartition $A|\bar{A}$, if $rank(\Gamma_{A|\bar{A}}) = 1$, then the subgraph across this bipartition has only the above two types.}
	\label{fig:appendix_fig2}
\end{figure}
Any other edge is deleted by C-Z gates within each partition. For the type-1 subgraph $G_{i,i+1}$, the bipartite EW reads
\begin{equation}
	\begin{aligned}
		\mathcal{W}_{G_{i,i+1}}=&\frac{1}{2}\left(|0_i0_{i+1}\rangle_{G_{i,i+1}}\langle 0_i0_{i+1}|+|0_i1_{i+1}\rangle_{G_{i,i+1}}\langle 0_i1_{i+1}|+|1_i0_{i+1}\rangle_{G_{i,i+1}}\langle 1_i0_{i+1}|\right.  \\
		&\left. +|1_i1_{i+1}\rangle_{G_{i,i+1}}\langle 1_i1_{i+1}|\right)    
		-|G_{i,i+1}\rangle\langle G_{i,i+1}|.
	\end{aligned}
\end{equation}
by employing Lemma 1, where the state vectors like $|0_i1_{i+1}\rangle_{G_{i,i+1}}$ are graph state basis defined in the Eq.\ (\ref{eq:graph_state_basis}), and the `$0$'s on other vertices are omitted for simplicity here and after. Note that the $g_j^{G_{i,i+1}}$ can be transformed back to the $g_j^{Cl_n}$ by employing C-Z gates without disturbing the eigenvalue of the state vector. Therefore the bipartite EW for the original state $|Cl_n\rangle$ is
\begin{equation}
	\begin{aligned}
		\mathcal{W}_{A|\bar{A}}=&\frac{1}{2}\left(|Cl_n\rangle\langle Cl_n|+|0_i1_{i+1}\rangle_{Cl_n}\langle 0_i1_{i+1}|+|1_i0_{i+1}\rangle_{Cl_n}\langle 1_i0_{i+1}|\right. \\
		&\left. +|1_i1_{i+1}\rangle_{Cl_n}\langle 1_i1_{i+1}|\right)-|Cl_n\rangle\langle Cl_n|,
	\end{aligned}
\end{equation}
when formulated in the graph state basis. After normalizing the $\mathcal{W}_{A|\bar{A}}$ to meet the constraint $Tr(\mathcal{W}_{A|\bar{A}}|Cl_n\rangle\langle Cl_n|)=-1$, we obtain
\begin{equation}
	\mathcal{W}_{A|\bar{A}}'=|0_i1_{i+1}\rangle_{Cl_n}\langle 0_i1_{i+1}|+|1_i0_{i+1}\rangle_{Cl_n}\langle 1_i0_{i+1}|+|1_i1_{i+1}\rangle_{Cl_n}\langle 1_i1_{i+1}|-|Cl_n\rangle\langle Cl_n|,
\end{equation}
as the bipartite EW used in our construction. This bipartite EW contributes the following terms to the set $\mathcal{S}$:
\begin{equation}
	\{|0_i1_{i+1}\rangle_{Cl_n},~|1_i0_{i+1}\rangle_{Cl_n},~|1_i1_{i+1}\rangle_{Cl_n}\}.
\end{equation}
This set is denoted as $\mathcal{S}_{i,type-1}=\{01,~10,~11\}_{i,i+1}$ for short.

Meanwhile, the type-2 subgraph $G_{i,i+1,i+2}$ in Fig.\ \ref{fig:appendix_fig2} can be transformed into the type-1 subgraph after applying $C_{Z,(i,i+2)}$, $U_i(G_{i,i+1,i+2})$ and $C_{Z,(i,i+2)}$ sequentially. We remark that the local complementation operation $U_i(G_{i,i+1,i+2})$ may change the corresponding eigenvalue when $g_j^{G_{i,i+1,i+2}}$ turns into $g_j^{G_{i,i+1}}$, which is decided by the Eq.\ (\ref{eq:stabizer_transfer}). Therefore, this kind of bipartitions contribute the following set of states to the set $\mathcal{S}$:
\begin{equation}
	\mathcal{S}_{i,type-2}=\{010,~101,~111\}_{i,i+1,i+2}.
\end{equation}
In summary, all the rank-$1$ bipartitions contribute an operator $R_1$ by our construction. If one denotes the $V_1$ as the set of vectors from $\{0,1\}^n$ such that the maximal distance between the `$1$'s appearing in each vector is smaller than 3, the $R_1$ can be formulated as
\begin{equation}
	R_1=\sum_{\vec{a}\in V_1}|\vec{a}\rangle_{Cl_n}\langle\vec{a}|.
\end{equation}

For rank-$2$ bipartitions, their boundaries are composed of two rank-$1$ boundaries. For example, if there are two type-$1$ parts, the bipartite EW takes the form
\begin{equation}
	\mathcal{W}_{A|\bar{A}}=\frac{1}{4}\sum_{\vec{a}\in\{0,1\}^4}|\vec{a}_{i,i+1,j,j+1}\rangle_{Cl_n}\langle \vec{a}_{i,i+1,j,j+1}|-|Cl_n\rangle\langle Cl_n|.
\end{equation}
After normalization, the bipartite EW reads
\begin{equation}
	\mathcal{W}_{A|\bar{A}}=\frac{1}{3}\sum_{\substack{\vec{a}\in\{0,1\}^4,\\\vec{a}\ne\vec{0}}}|\vec{a}_{i,i+1,j,j+1}\rangle_{Cl_n}\langle \vec{a}_{i,i+1,j,j+1}|-|Cl_n\rangle\langle Cl_n|.
\end{equation}
Such bipartite EW contribute the following new terms to the set $S$:
\begin{equation}
	S_{i,type-1}\otimes S_{j,type-1}=\{0101,~0110,~0111,~1001,~1010,~1011,~1101,~1110,~1111\}_{i,i+1,j,j+1}.
\end{equation}
The contribution of other possibilities can be decided in a similar manner as above for the type-2 subgraph. All these rank-$2$ bipartitions contribute a set $V_2$ to $\mathcal{S}$. Here a vector from $\{0,1\}^n$ belongs to $V_2$ if there exist at most two `$1$'s whose distance is larger than $2$ at the same time in the vector. Finally, all the rank-$2$ bipartitions introduce an operator $R_2$ to our construction, with
\begin{equation}
	R_2=\sum_{\vec{a}\in V_2}\frac{1}{3}|\vec{a}\rangle_{Cl_n}\langle\vec{a}|.
\end{equation}

For a rank-$k$ bipartition, the subgraph on the boundary is nothing but a combination of $k$ rank-$1$ part. After repeating the above process, it is shown that all the rank-$k$ bipartitions contribute the following operator $R_k$:
\begin{equation}
	R_k=\sum_{\vec{a}\in V_k}\frac{1}{2^k-1}|\vec{a}\rangle_{Cl_n}\langle\vec{a}|.
\end{equation}
A vector $\vec{a}$ belongs to $V_k$ if there exist at most $k$ for the number of `$1$'s in $\vec{a}$, such that their distance with each other are larger than $k$ at the same time. It can be observed immediately that $k\le \lceil n/3 \rceil$, indicating that the partition whose rank is higher than $\lceil n/3 \rceil$ gives no extra contribution.

After considering all the bipartitions, we end up with the GME witness $\mathcal{W}_{Cl_n}$ introduced in the main text, namely, 
\begin{equation}
	\mathcal{W}_{Cl_n}=\sum_{k=1}^{\lceil n/3 \rceil} R_k-|Cl_n\rangle\langle Cl_n|.
\end{equation}
As an example, for $4$-qubit cluster state,
\begin{equation}
	\mathcal{W}_{Cl_4}=\sum_{\vec{a}\in V_1}|\vec{a}\rangle_G\langle \vec{a}|+\frac{1}{3}\sum_{\vec{a}\in V_2}|\vec{a}\rangle_G\langle \vec{a}|-|G\rangle\langle G|,
\end{equation}
where $V_1$ is the set $\{0001,~0010,~0011,~0100,~0101,~0110,~0111,~1000,~1010,~1100,~1110\}$, and $V_2$ is the set $\{1001,~1011,~1101,~1111\}$.

Remarkably, in $4$-qubit case, the best known EW is \cite{PhysRevLett.106.190502}
\begin{equation}
	\mathcal{W}_{Cl_4}^{opt}=\sum_{\vec{a}\in V_1}|\vec{a}\rangle_G\langle \vec{a}|-|G\rangle\langle G|.
\end{equation}
It is finer than the $\mathcal{W}_{Cl_4}$ above. That is, while our approach is already quite powerful, there is still room for improvement. In this particular case, the improvement can be achieved by an elaborate choice of the set of bipartite EWs, instead of using Lemma 1 only. If the bipartite EWs for $13|24$
and $14|23$ in the above construction are replaced by
\begin{equation}
	\begin{aligned}
		\mathcal{W}_{13|24}=&|0001\rangle_{Cl_4}\langle0001| + |0100\rangle_{Cl_4}\langle0100|+|0101\rangle_{Cl_4}\langle0101| +|0011\rangle_{Cl_4}\langle0011| \\
		&+|0110\rangle_{Cl_4}\langle0110| +|0111\rangle_{Cl_4}\langle0111| -|Cl_4\rangle\langle Cl_4|,   \\
		\mathcal{W}_{14|23}=&|0001\rangle_{Cl_4}\langle0001| + |0010\rangle_{Cl_4}\langle0010|+|0101\rangle_{Cl_4}\langle0101| +|0011\rangle_{Cl_4}\langle0011|  \\
		&+|0110\rangle_{Cl_4}\langle0110| +|0111\rangle_{Cl_4}\langle0111| -|Cl_4\rangle\langle Cl_4|,
	\end{aligned}
\end{equation}
respectively, one can recover the $\mathcal{W}_{Cl_4}^{opt}$ with Theorem 1. With this example on $4$-qubit cluster state, we highlight that Lemma 1 is just an alternative choice which ends up with robust GME witnesses. Our construction in fact allows a flexible choice on the set of EWs to be lifted to multipartite case, and a suitable choice can further improve its performance. Moreover, it should be remarked that our discussion was based on the partial transposition throughout this paper, to obtain higher noise resistance. If bipartite EWs in the construction are designed by other positive maps (e.g., the Choi's map), different classes of GME witness can be found. This may help to harness the full potential of Theorem 1 in future work.

\subsection{Multipartite states admitting Schmidt decomposition.}\label{sec:appendix GHZ state} 

A special case of multipartite entangled states is the multipartite states admitting Schmidt decomposition. Without loss of generality, we can assume that such states are of the form $|\phi_s\rangle=\sum_{i=0}^{d-1}\sqrt{\lambda_i}|i\rangle^{\otimes n}$ with $\lambda_i\ge 0 $ in decreasing order. Then the Lemma 1 gives a set of bipartite EWs $\mathcal{W}_{A|\bar{A}}^{|\phi_{SD}\rangle}$:
\begin{equation}
    \mathcal{W}_{A|\bar{A}}^{|\phi_{SD}\rangle}=\sum_{i,j=0}^{d-1} \sqrt{\lambda_i\lambda_j}{|i\rangle^{\otimes k}}_A{|j\rangle^{\otimes n-k}}_{\bar{A}}{\langle i|^{\otimes k}}_A {\langle j|^{\otimes n-k}}_{\bar{A}}-|\phi_s\rangle\langle\phi_s|,
\end{equation}
where $k=|A|$ is the number of qudits in subsystem $A$. For these bipartite EWs, the set $S$ is
\begin{equation}
    \{\pi_m(|i\rangle^{\otimes r}|j\rangle^{\otimes n-r})\}_{r,i,j,\pi_m}\cup\{|l\rangle^{\otimes n}\}_{l=0}^{d-1},
\end{equation}
with $r=1,2,\cdots,n-1$, $i,j=0,1,\cdots,d-1$ ($i<j$) and $\pi_m$ being all possible permutations of $|i\rangle^{\otimes r}|j\rangle^{\otimes n-r}$. Note that all state vectors in $S$ are orthogonal with each other, thus our construction ends up with the following multipartite EW
\begin{equation}
	\begin{aligned}
		\mathcal{W}_{|\phi_s\rangle}=&\sum_{\substack{i,j=0,\\i< j}}^{d-1} \sum_{r=1}^{n-1} \sum_m \sqrt{\lambda_i\lambda_j}\pi_m(|i\rangle^{\otimes r}|j\rangle^{\otimes n-r})\pi_m(\langle i^{\otimes r}|\langle j|^{\otimes n-r}) \\
		&+\sum_{i=0}^{d-1}\lambda_i|i\rangle\langle i|^{\otimes n}-|\phi_s\rangle\langle\phi_s|,
	\end{aligned}
\end{equation}
where the summation of $m$ is over all possible permutations $\pi_m(|i\rangle^{\otimes r}|j\rangle^{\otimes n-r})$ of $|i\rangle^{\otimes r}|j\rangle^{\otimes n-r}$.

Moreover, similar to the case of proving the optimality of $\mathcal{W}_o^{|\phi\rangle}$ in the first section, one can verify the optimality of ${W}_{|\phi_s\rangle}$ by checking that all the biseparable states satisfying $Tr(\mathcal{W}_{|\phi_s\rangle}\rho_{bs})=0$ span the whole Hilbert space $\mathcal{H}_d^{\otimes n}$.

\subsection{GME witness for multi-qubit singlet states}\label{sec:appendix singlet} 

Multi-qubit singlet states are of particular experimental interest, while the GME witness for them is less investigated. In this example, it is shown that our framework works well for the multi-qubit singlet states. In the main text, we provide the result for a specific class of four-qubit singlet states. While here we begin with the discussion on general four-qubit singlet states 
\begin{equation}
	|\varphi_4\rangle = a |\psi_{12}^-\rangle\otimes |\psi_{34}^-\rangle + e^{i\theta}b |\psi_{13}^-\rangle\otimes |\psi_{24}^-\rangle, 
\end{equation} 
with the constraint $a^2+b^2+cos(\theta)ab=1$ and $|\psi_{12}^-\rangle$ being the two-qubit singlet state $(|01\rangle-|10\rangle)/\sqrt{2}$ on the first two qubits. By performing our construction procedure for all four-qubit singlet states, it is observed that the set $\mathcal{S}$ is always divided into $5$ subsets and the identity operators on the corresponding subspaces are just $\{\mathcal{P}_i^4\}_{i=0}^4$ (The $\mathcal{P}_i^4$ has been defined below the Eq.\ (\ref{eq:w_state})). More specifically, the resulting witness is
\begin{equation}
	\mathcal{W}_4 = c_2 \mathcal{P}_2^4 + c_1 (\mathcal{P}_1^4 + \mathcal{P}_3^4) + c_0 (\mathcal{P}_0^4  + \mathcal{P}_4^4) - |\varphi_4\rangle\langle\varphi_4|,
\end{equation}
with the coefficients decided by
\begin{equation}
	\begin{aligned}
  		c_2 &= \max\{1-\frac{3}{4}a^2, 1-\frac{3}{4}b^2, \frac{3}{4}(a^2+b^2) -\frac{1}{2}\},  \\
  		c_1 &= \frac{1}{2},  \\ 
  		c_0 &= \max\{\frac{1}{2}-\frac{1}{4}(a^2+b^2),\frac{1}{4}a^2,\frac{1}{4}b^2\}. 
  \end{aligned}
\end{equation}
Specially, with a choice of $\theta = \pi/2$, this recovers the EW in the main text. While if $a=-1$, $b=1$ and $\theta=0$, $|\varphi_4\rangle$ becomes a biseparable state $|\psi_{14}^-\rangle\otimes |\psi_{23}^-\rangle$ and the corresponding EW become positive semidefinite. 

When the number of qubit grows, achieving a generic expression becomes more complicated. To investigate the GME witness construction in this case, we consider the following six-qubit singlet state 
\begin{equation}
	\begin{aligned}
		|\varphi_6\rangle = &\frac{1}{2}\left(|\psi_{12}^-\rangle\otimes |\psi_{34}^-\rangle \otimes |\psi_{56}^-\rangle + i |\psi_{13}^-\rangle\otimes |\psi_{24}^-\rangle \otimes |\psi_{56}^-\rangle \right. \\
		& \left. + i |\psi_{12}^-\rangle\otimes |\psi_{35}^-\rangle \otimes |\psi_{46}^-\rangle - |\psi_{13}^-\rangle\otimes |\psi_{25}^-\rangle \otimes |\psi_{46}^-\rangle\right),
	\end{aligned}
\end{equation}
for which we arrive at the GME witness 
\begin{equation}
	\mathcal{W}_6= \frac{5}{8} \mathcal{P}_3^6 + \frac{1}{2} (\mathcal{P}_2^6 + \mathcal{P}_4^6) + \frac{1}{4} (\mathcal{P}_1^6 + \mathcal{P}_5^6) + \frac{1}{8} (\mathcal{P}_0^6 + \mathcal{P}_6^6) - |\varphi_6\rangle\langle\varphi_6|.
\end{equation}
Based on these results, it is reasonable to conjecture that for some $2n$-qubit singlet state $|\varphi_{2n}\rangle$, there exists a GME witness taking the form 
\begin{equation}
	\mathcal{W}_{2n}= c_n \mathcal{P}_n^{2n} + \sum_{i=0}^{n-1} c_i(\mathcal{P}_i^{2n} + \mathcal{P}_{2n-i}^{2n})- |\varphi_{2n}\rangle\langle\varphi_{2n}|.
\end{equation}
with $c_i \ge c_{i-1} \ge 0$ for $i=1,\cdots, n$ and $c_n$ is the maximal squared overlap between $|\varphi_6\rangle$ and biseparable states. Moreover, if $c_i$ scales with $(1/2)^{-(n-i+1)}$ as in the four- and six-qubit case, a high white noise tolerance tending to $1$ can be expected for a large number of qubit.

\end{document}